\newtheorem{theorem}{Theorem}[section]
\newtheorem{claim}[theorem]{Claim}
\newtheorem{definition}{Definition}[section]
\newtheorem{property}[theorem]{Property}
\newtheorem{lemma}[theorem]{Lemma}
\newtheorem{corollary}[theorem]{Corollary}
\newcommand{\opt}{\hbox{OPT}}
\newcommand{\Rad}{\hbox{Rad}}
\newcommand{\TSP}{\hbox{TSP}}
\newcommand{\length}{\hbox{length}}
\newcommand{\level}{\hbox{level}}
\begin{document}

%%%%%%%%%%%%%%%%%%%%%%%%%%%%%%%%%%%%%%%%%%%%%%%%%%%%%%%%%%%%%%%%%%
%% INTRODUCTION
%%%%%%%%%%%%%%%%%%%%%%%%%%%%%%%%%%%%%%%%%%%%%%%%%%%%%%%%%%%%%%%%%%
\title{A quasi-polynomial time approximation scheme for Euclidean capacitated vehicle routing}
\author{Aparna Das \and Claire Mathieu}
\maketitle
\abstract{}
In the capacitated vehicle routing problem, introduced by Dantzig and Ramser
in 1959, we are given the locations of $n$ customers and a depot, along with a vehicle of
capacity $k$, and wish to find a minimum length collection of tours, each
starting from the depot and visiting at most $k$ customers, whose union covers
all the customers. We give a quasi-polynomial time approximation scheme for the setting where the customers and the depot are on the plane, and
distances are given by the Euclidean metric.

\section{Introduction}
In 1959, Dantzig and Ramser introduced the vehicle routing problem (VRP) and gave a linear programming based Algorithm whose {\em``calculations may be readily performed by hand or automatic digital computing machine''}\cite{dr59}. Since its introduction, VRP has been widely studied by researchers in Operations Research and Computer Science. Several books (see \cite{tvbook}, \cite{grwbook} and \cite{fbook}, among others) have been written on the VRP and it even has its own wikipedia page \cite{wikiweb}. 

\smallskip
VRP is used to describe a class of problems where the objective is to find low cost delivery routes from depots to customers using a vehicle of limited capacity. When Dantzig and Ramser first introduced VRP they stated that {\em``no practical applications have been made as yet''}\cite{dr59}. Since that time, applications of various VRP problems have been identified in numerous industries where transportation costs matter such as food and beverage distribution, and package and newspaper delivery. Toth and Vigo report on businesses that saved between 5 \% and 20\% of total costs by solving VRP problems using computerized models \cite{tvbook}.

\smallskip
\noindent
{\bf The capacitated vehicle routing problem.} We study the most basic form of the vehicle routing problem, the capacitated version (CVRP), where the input is $n+1$ points representing the locations of $n$ customers and one depot, and a vehicle of capacity $k$. The objective is to find a collection of tours, starting at the depot and visiting at most $k$ customers, whose union covers all $n$ customers, such that the sum of the lengths of the tours is minimized. CVRP is also referred to as the $k$-tours problem in the Computer Science literature \cite{arora, aktt}.  We focus on the $2d$-Euclidean version of CVRP where customers and the depot are on the Euclidean plane.

\smallskip
\noindent
{\bf Previous work. } Several results are already known about the approximability of CVRP. When the capacity of the vehicle $k$ is $2$, it can be solved in polynomial time using minimum weight matching. The metric case was shown to be APX-complete for all $3 \le k$; Asano et al. presented a reduction from H-matching for $k = O(1)$ and there is a simple reduction from the traveling salesman problem (TSP) for larger $k$ \cite{aktt96}. Constant factor approximation with performance $(1+\alpha)$ with $\alpha$ being the best constant factor approximation for TSP, were presented by Haimovich and Rinnooy Kan \cite{hr85}.  

\smallskip

The existence of a PTAS for the $2d$-Euclidean version remains an active area of research.  Extending \cite{hr85}, Asano et al. presented a PTAS for the case $k = O(\log n/\log \log n)$ \cite{aktt}. Arora's work implies a PTAS for $k = \Omega(n)$ \cite{aktt}. 

\smallskip
\noindent
{\bf Our result. } We present a quasi-polynomial time approximation scheme for the entire range of $k$. 
 \begin{theorem}\label{thm:main}(Main Theorem) Algorithm~\ref{alg:ktour} is a randomized quasi-polynomial time approximation scheme for the $2d$-Euclidean capacitated vehicle routing problem. Given  $\epsilon > 0$, it outputs a solution with expected length $(1+O(\epsilon))\opt$, in time $n^{\log^{O(1/\epsilon)}n}$. The Algorithm can be derandomized.
\end{theorem}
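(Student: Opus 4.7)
I plan to adapt Arora's framework for Euclidean TSP to the capacitated setting. The standard opening steps are: perturb the input to lie on a grid with bounded coordinates (losing only a $(1+\epsilon)$ factor), impose a randomly shifted hierarchical dissection of depth $O(\log n)$, and place $m = \text{poly}(\log n / \epsilon)$ portals on the boundary of every dissection square. The algorithm (Algorithm~\ref{alg:ktour}) is then a dynamic program that computes, bottom-up, the cost of the best portal-respecting solution inside each square consistent with an ``interface description'' on its boundary. Averaging over the random shift will give expected cost $(1+O(\epsilon))\opt$, and derandomization proceeds by enumerating the $n^{O(1)}$ distinct shifts.

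\textbf{Structure theorem.} The key step is to prove that, with constant probability over the random shift, there exists a solution of length at most $(1+O(\epsilon))\opt$ in which every tour is portal-respecting and each tour crosses each dissection boundary only a polylogarithmic number of times. Arora's patching lemma gives the analogous bound on the \emph{aggregate} number of crossings for TSP; here I would refine it by classifying tours into $O(\log n)$ ``length levels'' and patching level by level, so that the extra length introduced within a level remains a small fraction of that level's total length. This per-level control is what ultimately makes the DP state size depend on $\log n$ in the exponent.

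\textbf{Dynamic program and running time.} Given the structure theorem, the state for a dissection square $S$ records, for each pair of portals on $\partial S$, how many tour segments join those portals, together with a rounded count (to powers of $1+\epsilon$) of the customers each such segment collects inside $S$. For the root square, a valid state corresponds to a feasible CVRP solution: segments are chained via the depot into tours, each of capacity at most $k$. Because each square can hold $n^{\log^{O(1/\epsilon)} n}$ states and transitions combine the states of four children, the DP runs in the target time $n^{\log^{O(1/\epsilon)} n}$, producing the minimum-cost portal-respecting solution compatible with some root state.

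\textbf{Main obstacle.} The principal difficulty is the capacity-respecting version of the patching lemma. Arora-style patching replaces many boundary crossings by a constant number; in CVRP the replacement crossings must be reassigned to specific tours, which naively can violate capacities or force unbounded rerouting of the affected tours elsewhere. Reconciling this with the level-based charging scheme, while simultaneously keeping the DP state polylogarithmic per square, is the crux of the proof and the reason the running time is quasi-polynomial rather than polynomial. I expect the argument to hinge on showing that at each level, the aggregate patching cost charged against a tour $T$ is $O(\epsilon)$ times the portion of $T$'s length that lies near high-level dissection boundaries, which is charged back to $\opt$ by the random-shift argument.
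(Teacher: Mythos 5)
Your proposal captures the Arora-framework scaffolding (perturbation, random dissection, portals, bottom-up DP with rounded per-segment customer counts), which is indeed how the paper begins. But there is a genuine gap in the core of your argument: you assume that "for the root square, a valid state corresponds to a feasible CVRP solution: segments are chained via the depot into tours, each of capacity at most $k$." Once you round segment counts to powers of $1+\epsilon$ (as your DP must, to keep the state polylogarithmic), the DP can no longer guarantee feasibility: rounding errors accumulate across the $O(\log n)$ levels of the dissection, so a chain of rounded segments certified as "$\le k$" can in fact cover more than $k$ customers. You have no mechanism to repair this. The paper's central idea, which your proposal is missing, is to \emph{deliberately} define and solve a \emph{relaxed} CVRP whose tours may overfill by a $(1+\epsilon/\log n)^{O(\log n)}$ factor, and then restore feasibility after the fact: it randomly drops enough points from each overfull tour (the "type assignment"), makes those dropped points a new CVRP instance, and solves that instance with the Haimovich--Rinnooy Kan $3$-approximation. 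The correctness argument then requires proving that the dropped points have both small $\mathrm{Rad}$ and small $\mathrm{TSP}$ (so the $3$-approximation on them costs only $O(\epsilon)\cdot\opt$), which in turn requires the careful choice of which points to drop and requires that rounding only happens inside dissection squares that already contain at least $\gamma=\Omega(\log^9 n/\epsilon^4)$ segments (so the connection cost to ferry the dropped points out can be charged to the many boundary crossings). None of this machinery appears in your outline, and it is the part of the proof that resolves the obstacle Arora himself flagged.

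Separately, your structure theorem is framed around "polylogarithmic crossings per tour" obtained by patching tours grouped into $O(\log n)$ length levels; this is not the route the paper takes and it is also where you concede you do not have a full argument. The paper instead applies Arora's \emph{unmodified} TSP structure theorem to each tour individually (the "i-light" notion, with $O(1/\epsilon)$, not polylog, crossings per square side), and absorbs the blowup from forcing every tour to be light simultaneously into an extended objective function that penalizes crossings; the extra penalty is then charged to $\opt$ via the random-shift argument. You do not need, and the paper does not attempt, a capacity-respecting patching lemma; that is precisely the dead end the relax-then-repair strategy is designed to avoid. As written, your proposal would stall on both the infeasibility issue and the open patching lemma, so it does not constitute a proof of the theorem.
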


\smallskip
\noindent
{\bf Where previous approaches fail. }Our approximation scheme uses the divide and conquer approach that Arora used in designing a PTAS for Euclidean TSP \cite{arora}. Like Arora, we ``divide" the problem using a randomized dissection that recursively partitions the region of input points into progressively smaller squares. We search for a solution that goes back and forth between adjacent squares a limited number of times and always through a small number of predetermined sites called \emph{portals} that are placed along the boundary of squares. It is natural to attempt to extend the TSP structure theorem and show that there exists a near optimal solution that crosses the boundary of squares a small number of times, and then use dynamic programming.  Unfortunately ~\cite{arorasurvey}, {\em \begin{quote}``we seem to need a result stating that there is a near-optimum solution which enters or leaves each area a small number of times. This does not appear to be true. [...] The difficulty lies in deciding upon a small interface between adjacent squares, since a large number of tours may cross the edge between them. It seems that the interface has to specify something about each of them, which uses up too many bits.''\end{quote}}  Indeed, to combine solutions in adjacent squares it seems necessary to remember the number of points covered by tour segments, and that is is too much information to remember. 
 
% \smallskip
% As further motivation for the need to remember the number of points on each tour segment, consider the example in figure \ref{fig:grid}. It has $k$ points arranged in a $k/4$-by-$4$ unit grid, $k$ points at the far right location $(k/4+1, 3/2)$ and the depot at left location $(0,3/2)$. The optimal solution uses $2$ tours each picking up half of the grid points and $k/2$ points from the far right and has cost $k + 2\sqrt{5} + 2\sqrt{13} -4 < k+8$. Any solution which cover all of the $k$ points from the far right with a single tour will have cost at least $3/2 k$ as it will cost $k/2$ to cover the $k$ points with a single tour and $k$ to cover the points on the unit grid with a single tour. 

% \begin{figure} [ht]
% \centering \epsfig{file=grid.eps, scale=0.6}
% \label{fig:grid}
% \end{figure}           

\smallskip
\noindent
{\bf Overview of our approach. } To get around this problem we remember \emph{approximately} how many points are on each tour segment. Specifically we round the number of points on each tour segment down to a multiple $(1+\epsilon/\log n)$, which means there are only $O(\log k)$ possibilities for the approximate number of points on each tour segment. This enables us to deal with the difficulties described by Arora: now we have a small interface between adjacent squares, namely, for every pair of portals and every approximate number of points, we remember the number of tour segments that have this profile. The quasipolynomial running time of our dynamic program (DP) follows since the number of profiles is polylogarithmic and as there are at most $n$ tour segments of each profile.

One challenge with only remembering the approximate number of points per tour segment is that the tours found by our dynamic program (DP) could now cover a little bit more than $k$ points.  To deal with this issue, whenever a tour of the DP solution covers more than $k$ points we carefully choose enough points and drop them from the tour to make it feasible. Finally we compute a solution just on the dropped points using a $3$-approximation \cite{hr85} and prove that it has negligible length compared to $\opt$.

In section \ref{sec:main} we present our approximation scheme and prove its correctness under the assumptions that the DP solution is near optimal (Theorem \ref{thm:black}, proved in section \ref{sec:blackcost}) and that the $3$-approximation solution on the dropped points has length at most $O(\epsilon)\opt$ (Theorem \ref{thm:red}, proved in section \ref{sec:redcost}).  Section \ref{section:dynamicprogram} presents the DP and section \ref{sec:derandomize} the derandomization.  

\smallskip
\noindent
{\bf Related Work. } Our work builds on the approach that Arora \cite{arora} used in designing a PTAS for the geometric traveling salesman problem. Similar techniques were also presented by Mitchell \cite{mitchell}. Recently these techniques have been applied to design approximation schemes for several NP-Hard geometric problems, including the polynomial time approximation schemes for Steiner Forest \cite{steinerforest}, and $K$-Median \cite{kmedian} and quasipolynomial time schemes for Minimum Weight Triangulation \cite{mintriangulation} and Minimum Latency problems \cite{minlatency} among others. See \cite{arorasurvey} for a survey of these techniques and a discussion about generalizing them to other problems.
 
%\paragraph{Example (1)} $k$ points in one location but not on the same tour.
%Consider the example in figure \ref{fig:grid}. It has $k$ points arranged in a $k/4$-by-$4$ unit grid, $k$ points at location $(k/4+1, 3/2)$ and the depot at location $(0,3/2)$. The optimal solution uses $2$ tours each picking up half of the grid points and $k/2$ points from the far right and has cost $k + 2\sqrt{5} + 2\sqrt{13} -4 < k+8$. Any solution which cover all of the $k$ points from the far right with a single tour will have cost at least $3/2 k$ as it will cost $k/2$ to cover the $k$ points with a single tour and $k$ to cover the points on the unit grid with a single tour. 

%%%%%%%%%%%%%%%%%%%%%%%%%%%%%%%%%%%%%%%%%%%%%%%%%%%%%%%%%%%%%%%%%%%%%%%%%%%%%%%%%%%%%%%%%%%%%%%%
\section{The Algorithm}\label{sec:main}
%%%%%%%%%%%%%%%%%%%%%%%%%%%%%%%%%%%%%%%%%%%%%%%%%%%%%%%%%%%%%%%%%%%%%%%%%%%%%%%%%%%%%%%%%%%%%%%%
An overview of our approximation scheme is given in Algorithm \ref{alg:ktour}. We use a quasipolynomial time DP to find a near optimal solution, $\opt^{DP}$, which may include some tours that cover more than $k$ points. We drop points from each infeasible tour of $\opt^{DP}$, choosing the points carefully using a randomized procedure to obtain the set of feasible \emph{black} tours. A solution on the dropped points, the {\em red} tours, is obtained using a $3$-approximation. We output the union of the red and black tours. 

\begin{algorithm}[htpb]\caption{CVRP approximation scheme}\label{alg:ktour}
Input: $n$ points in $\mathbbm{R}^2$ and an integer $k$. 
\begin{algorithmic}[1]
\STATE
Perturb instance, perform random dissection and place portals as described in section \ref{sec:preprocess}.
\STATE 
Use the DP from section~\ref{section:dynamicprogram} to find $\opt^{DP}$ which is defined in subsection \ref{sec:structure}.
\STATE 
Trace back in the DP's history to construct tours and assign types to points using the randomized type assignment from subsection \ref{sec:typeassignment}. 
\STATE 
A point is \emph{black} if has type -1 and \emph{red} otherwise. Drop all red points from the $\opt^{DP}$ tours.
\STATE
Get a solution for the red points using the $3$-approximation Algorithm from subsection \ref{sec:3approximation}. 
\end{algorithmic}
Output: the union of the red tours on the red points and the black tours on the black points.
\end{algorithm}
%%%%%%%%%%%%%%%%%%%%%%%%%%%%%%%%%%%%%%%%%%%%%%%%%%%%%%%%%%%%%%%%%%%%%%%%%%%%%%%%%%%%%%%%%%%%%%%%
\subsection{Preprocessing~\cite{arora}}\label{sec:preprocess}
%We use Arora's approach to preprocess the input \cite{arora}.
\smallskip
\noindent
{\bf Perturbation.} Algorithm \ref{alg:ktour} will work on a perturbed instance. Following Arora's approach from \cite{arora}, we perturb and scale so that all points fit into a square bounding box of side length $L = O(n)$ where $L$ is a power of $2$, and so that all points have integer coordinates and the distance between any two points is either $0$ or at least $4$.  A solution for the perturbed instance can be extended to the original instance using additional length $O(\epsilon)\opt$. See appendix \ref{sec:tools} for the details. From this point, $\opt$ will denote the optimal CVRP solution to the perturbed instance.

\smallskip
\noindent
{\bf Randomized Dissection.} Using Arora's divide and conquer paradigm from ~\cite{arora}, we perform a randomized dissection of the input by recursively partitioning the bounding square into 4 smaller squares of equal size using one horizontal and one vertical dissection line. The recursion stops when the smallest squares have size $1 \times 1$. We define levels for the squares and the dissection lines. See appendix \ref{sec:tools} for  the details. The maximum level of any square or line is $\ell_{\max} = O(\log n)$.   The probability that a line $l$ becomes a level $\ell$ dissection line in the randomized dissection is 
\begin{equation}\label{eq:prline}
Pr(\level(l) = \ell) = 2^{\ell}/L
\end{equation}  

\smallskip
\noindent
{\bf Portals.} As in~\cite{arora}, we place points called \emph{portals} on the boundary of dissection squares that will be the entry and exit points for tours.  Let $m = O(\log n/\epsilon)$ and a power of $2$. Place $2^{\ell}m$ portals equidistant apart on each level $\ell$ dissection line for all $\ell \le \ell_{max}$. Since a level $\ell$ line forms the boundary of $2^\ell$ level $\ell$ squares there will be at most at $4m$ portals along the boundary of any dissection square $b$. As $m$ and $L$ are powers of $2$, portals at lower level squares will also be portals in higher level squares.  

\begin{definition}(Portal respecting and light) 
A tour is {\em portal respecting} if it crosses dissection lines only at portals. A tour is {\em light} if it is portal respecting and crosses each side of a dissection square at most $r= O(1/\epsilon)$ times. 
\end{definition}

Arora showed that for TSP there exists a near optimal solution which is light.  
\begin{theorem}\label{thm:arora}\cite{arora}
Let $\opt(TSP)$ denote the optimal solution for an instance of Euclidean TSP and let $D$ be its randomized dissection. With probability at least $1/2$ there exists a salesman tour of length $(1+O(\epsilon))\opt$ that is light with respect to dissection $D$.
\end{theorem}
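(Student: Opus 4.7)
The plan is to start from an optimal tour $T^*$ of length $\opt(\TSP)$ and modify it in two stages so that it becomes light, while bounding the expected added length (over the randomized dissection) by $O(\epsilon)\opt(\TSP)$. An application of Markov's inequality then yields a light tour of length $(1+O(\epsilon))\opt(\TSP)$ with probability at least $1/2$.

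Stage one enforces the crossing bound $r=O(1/\epsilon)$ per side via Arora's patching lemma: for any line segment $s$ and any set of points $p_1,\dots,p_t$ on $s$, there is a way to modify a tour that visits these points with a local rerouting of total extra length at most $6 \cdot \length(s)$ such that, after modification, at most $2$ ``strands'' of the tour cross $s$. I would apply patching bottom-up along the dissection tree: for each square $b$ and each side $s$ of $b$, if $T^*$ currently crosses $s$ more than $r$ times, patch $s$ to reduce the number of crossings to at most a constant. Doing this from the deepest level upward guarantees that later patches do not undo earlier ones. The cost of the patching at level-$\ell$ line $l$ is $O(\length(l \cap \text{square}))$ per patched segment, and the key observation is that if an edge $e$ of $T^*$ crosses a level-$\ell$ line, it contributes to the patching cost in expectation only in proportion to the probability that the line is at level $\ell$ or below.

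Stage two makes the (now reduced) crossings portal-respecting: for every remaining crossing of a level-$\ell$ dissection line, shift that crossing to the nearest portal. Since portals on a level-$\ell$ line are spaced $L/(2^\ell m)$ apart, each such move costs at most $L/(2^\ell m)$. Summed over all crossings of $T^*$ this gives an added length bounded by a sum over edges $e$ of $T^*$ and levels $\ell$ of (number of level-$\ell$ crossings of $e$) times $L/(2^\ell m)$.

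The main obstacle (and the crux of the argument) is the expected-cost analysis. Using (\ref{eq:prline}), a fixed horizontal or vertical line $l$ is a level-$\ell$ line with probability $2^\ell/L$, and for any edge $e$ of $T^*$ the expected number of level-$\ell$ dissection lines that $e$ crosses is $O(\length(e) \cdot 2^\ell/L)$. Therefore the expected contribution of $e$ to the portal-shifting cost is $\sum_\ell O(\length(e)\cdot 2^\ell/L) \cdot L/(2^\ell m) = O(\lmax \length(e)/m) = O(\epsilon\length(e))$ by the choice $m = \Theta(\log n/\epsilon)$. A parallel calculation for stage one, charging each unit of patching cost on a level-$\ell$ line to the $T^*$-edges that cross that line, gives an expected added length of $O(\epsilon)\length(T^*)$ as well once $r$ is chosen as a sufficiently large constant times $1/\epsilon$ (the gain comes from the fact that only crossings beyond the first $r$ trigger patching, which, together with the geometric decrease of the per-level contribution, produces the $\epsilon$ factor). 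Summing both stages yields $\mathbb{E}[\text{added length}] = O(\epsilon)\opt(\TSP)$, and Markov's inequality concludes the proof.
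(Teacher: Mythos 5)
The paper cites this theorem directly from Arora \cite{arora} and does not reprove it, so there is no ``paper's own proof'' to compare against; what you have written is essentially Arora's original argument, reproduced in sketch form. The ingredients you use---(i) the patching lemma to bound crossings per side, applied bottom-up in the dissection tree; (ii) shifting crossings to the nearest portal, with per-crossing cost $L/(2^\ell m)$ on a level-$\ell$ line; (iii) the expected-cost calculation using $\Pr[\level(l)=\ell]=2^\ell/L$ together with $m=\Theta(\log n/\epsilon)$ and $r=\Theta(1/\epsilon)$; and (iv) a final Markov step to pass from an expectation bound to a probability-$1/2$ bound---are exactly the components of Arora's structure theorem, so the approach is the same.

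Two places where your sketch glosses over real work in Arora's argument, and which you should be aware of if you want a complete proof: first, the portal-shifting detours and the patchings can themselves create new crossings of the perpendicular family of dissection lines, so one cannot simply treat the two stages as independent; Arora's bottom-up bookkeeping is designed precisely to control these induced crossings. Second, your stage-one charging sentence (``geometric decrease of the per-level contribution'') compresses the key telescoping identity
\[
\sum_{\ell}\frac{2^\ell}{L}\sum_{j\ge\ell}c_{j,l}\,O\!\left(\frac{L}{2^j}\right)
=\sum_{j}c_{j,l}\,O\!\left(\frac{1}{2^j}\right)\sum_{\ell\le j}2^\ell
=O\!\Bigl(\sum_j c_{j,l}\Bigr),
\]
combined with the bound $\sum_j c_{j,l}\le t(\pi,l)/(r-O(1))$ on the number of patchings per line. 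Spelling this out is what actually produces the $1/r=O(\epsilon)$ factor; without it the claim that stage one costs only $O(\epsilon)\opt$ is not justified.
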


%%%%%%%%%%%%%%%%%%%%%%%%%%%%%%%%%%%%%%%%%%%%%%%%%%%%%%%%%%%%%%%%%%%%%%%%%%%%%%%%%%%%%%%%%%%%%%%%
\subsection{The Structure Theorem}\label{sec:structure}
We formally define the solution computed by our DP. Recall that remembering the exact number of points on tour segments may require too many bits. Instead, we define $O(\log k)$ \emph{thresholds} that are multiples of $(1+\epsilon/\log n)$ in the range $[1, k]$ and round down the number of points on each segment to its closest threshold. Given a tour segment covering a non-threshold number of points $x$ with $t$ being the closest threshold value less than $x$, to "round" the segment we set the \emph{type} of exactly $x-t$ points to indicate they should  be dropped from the segment.  When two tour segments each covering a threshold number of points are concatenated, the new segment may have to be "rounded" again as the sum of two thresholds do not necessarily add to another threshold.  The DP works bottom-up in the dissection tree and may have to "round" tour segments at each level of the tree. It will mark the points that need to be dropped at level $\ell$ by setting their type to $\ell$. Points with a type in $[0,\ell_{max}]$ are dropped from the final tours.
  
\begin{definition}(Thresholds, Types, and Rounded Segments)
\begin{itemize}
\item
Let $\tau = \log_{(1+\epsilon/\log n)}(k/\epsilon)$. We define a sequence of $\tau +1$ {\em thresholds} :  
$t_0 =1$, $t_1=1/\epsilon$, \ldots, $t_i =1/\epsilon(1+\epsilon/\log n)^{i-1}$, \ldots, $t_{\tau} = k$. 
\item 
Assume that each point has a \emph{type} which is an integer in $[-1, \ell_{max}]$. We say that a point is \emph{active at level $\ell$}  if its type is strictly less than $\ell$.
\item
Let $\pi = (\pi_i)$ be a set of tours.  For any $\pi_i$ and any dissection square $b$ at level $\ell$, a {\em segment} is a connected component of $\pi_i\cap b$. A segment is either {\em rounded} or {\em unrounded}, with the following property: for each rounded segment $s$, there is a threshold $t_i$ such that $s$ covers exactly $t_i$ active points.
\end{itemize}
\end{definition}

The DP builds tours are allowed to cover more than $k$ points and thus in one sense the DP solves a \emph{relaxed} version of CVRP. To ensure that the additional cost of making the DP solution feasible small compared to $\opt$, we only round tour segments inside a dissection square when there are many, at least $\gamma$ (defined below), segments entering the square in which case the cost of going from the depot to the dropped points in the square can be \emph{charged} to $\opt$. If there are only a few, less than $\gamma$ per threshold, tour segments entering the square, we can afford to remember the exact number of points per segment. The intuition behind the third part of definition \ref{def:typerespecting} is to limit the number of points that are marked to be dropped from each tour.
\begin{definition}(Relaxed CVRP) \label{def:typerespecting}
A {\em relaxed CVRP} is a set of tours such that there exists an assignment of types to the points with the following properties:
\begin{enumerate}
\item 
Each tour covers the depot, at most $k$ points of type $-1$, and possibly some points of type $>-1$. The union of the tours covers all $n$ points.
\item 
Let $\gamma= \log^9 n/\epsilon^4$. In any dissection square $b$ for each threshold value $t_i$ ($i \le \tau-1$) there are at most $\gamma$ unrounded segments covering between $[t_i, t_{i+1})$ points. The number of rounded segments in $b$ covering exactly $t_i$ points that are active at $\level (b)$ is an integer multiple of $\gamma$.
\item 
Let $b$ be a dissection square and let $s$ be a tour segment in $b$, which has $t$ points that are active at $\level (b)$. Then segment $s$ has at most $t(1+\epsilon/\log n)$
active points at $\level(b)+1$.
\end{enumerate}
\end{definition}

We apply the concept of being light to each tour of the solution individually rather than to the solution as a whole.
\begin{definition}\label{def:ilight}
 Let $D$ be a dissection and let $S$ be a CVRP solution consisting of tours $(\pi_i)$. $S$ is called {\em i-light} if each tour $\pi_i$ is light. 
\end{definition}

We extend the objective function to include a penalty for tour crossings:
\begin{definition} (Extended Objective Function)
Let $\pi=(\pi_i)$ be a set of tours. For every level $\ell$ let $c(\pi_i,\ell)$ be the number of  times  tour  $\pi_i$ crosses the boundary of level $\ell$ squares, and $d_{\ell} = L/2^{\ell}$ be the length of a level $\ell$ square in the dissection. The extended objective function is:
\begin{equation}\label{eq:F}
F(\pi) = \sum_i\length(\pi_i) + \frac{\epsilon}{\log^2n }\sum_{\textrm{ level}  \ell} \sum_i c(\pi_i, \ell) \cdot d_\ell,
\end{equation} 
\end{definition}

\begin{theorem}\label{thm:black}(Structure Theorem) In expectation over the shifts of the random dissection, the i-light and relaxed CVRP solution which minimizes the extended objective function $F$ given in equation \ref{eq:F} has length $(1+O(\epsilon))\opt$.
\end{theorem}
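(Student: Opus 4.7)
The plan is to exhibit, for each realization of the random dissection shifts, a specific i-light and relaxed solution $S'$ obtained from $\opt$ by a two-stage modification, and to show $E[F(S')] = (1+O(\epsilon))\opt$. Since $\length(S)\le F(S)$ for every solution, and any $F$-minimizer $S^{*}$ over the i-light, relaxed solutions satisfies $F(S^{*})\le F(S')$, this will yield $E[\length(S^{*})]\le E[F(S')] = (1+O(\epsilon))\opt$, which is the claim.

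Stage~1 modifies the tour geometry to make each tour i-light. I apply Arora's patching argument (Theorem~\ref{thm:arora}) tour by tour: for each $\pi_i$ of $\opt$, move every dissection-line crossing to its nearest portal and then patch excess crossings so that $\pi_i$ crosses every side of every dissection square at most $r=O(1/\epsilon)$ times. By the standard edge-by-edge charging that uses equation~\ref{eq:prline}, the expected extra length per tour is $O(\epsilon)\length(\pi_i)$, so summing over tours the total expected length remains $(1+O(\epsilon))\opt$. Since this stage does not alter which points a tour visits, every tour still covers at most $k$ points.

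Stage~2 assigns point types so as to achieve the three conditions of Definition~\ref{def:typerespecting}. Initialize every type to $-1$ and sweep levels from $\lmax$ down to $0$. At a square $b$ at level $\ell$, for each threshold bucket $[t_i,t_{i+1})$ look at the segments in $b$ whose active-at-$(\ell+1)$ count lies in that bucket; if this set has size $c_i\ge \gamma$, pick $\lfloor c_i/\gamma\rfloor\gamma$ of them and, on each such segment with active count $x$, reassign $x-t_i$ of its active points to type $\ell$, reducing its active-at-$\ell$ count to exactly $t_i$. The residual $<\gamma$ segments in the bucket stay unrounded. Condition~1 holds because each tour of $\opt$ carried at most $k$ points and reassigning types only shrinks the type-$-1$ count per tour. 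Condition~2 holds by construction. Condition~3 follows because a rounded segment loses $x-t_i\le t_{i+1}-t_i=(\epsilon/\log n)t_i$ active points at level $\ell$, giving an active-at-$(\ell+1)$/active-at-$\ell$ ratio of at most $1+\epsilon/\log n$; for unrounded segments the ratio is exactly $1$.

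Since Stage~2 moves no tour, $S'$ inherits the length bound and i-light property of Stage~1, so $E[\sum_{i}\length(\pi_{i})]=(1+O(\epsilon))\opt$. For the penalty term of $F$, the usual argument (any edge of length $e$ crosses the level-$\ell$ dissection lines $O(e\cdot 2^{\ell}/L)$ times in expectation, so $E[c(\pi_{i},\ell)]\cdot d_{\ell}=O(\length(\pi_{i}))$ per level) bounds $E\!\left[\sum_{\ell}\sum_{i}c(\pi_{i},\ell)d_{\ell}\right]$ by $O(\opt\log n)$, and multiplication by $\epsilon/\log^{2}n$ leaves only an $O(\epsilon/\log n)\opt$ contribution. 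Hence $E[F(S')]=(1+O(\epsilon))\opt$. The delicate point is the joint verification of conditions~2 and~3: the geometric spacing $(1+\epsilon/\log n)$ between consecutive thresholds is precisely the ratio that condition~3 tolerates, so that rounding each full bucket down to its lower threshold is exactly compatible with the per-level growth constraint. The much larger $\gamma=\log^{9}n/\epsilon^{4}$ is not used here; it is paid for separately when the dropped points are re-routed and the overhead is charged to $\opt$ in Theorem~\ref{thm:red}.
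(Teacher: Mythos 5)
The overall structure of your argument matches the paper's: construct a candidate i-light, relaxed solution $S'$ by (i) applying Arora's patching/portal-snapping to each tour of $\opt$ (Corollary~\ref{cor:aroraExtention} and Theorem~\ref{thm:arora}), and (ii) a bottom-up bucket-wise type assignment (Lemma~\ref{lem:assigntypes} in the appendix, which your Stage~2 reproduces essentially verbatim, including the verification of all three conditions of Definition~\ref{def:typerespecting}). So far so good.

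The gap is in how you bound the crossing penalty in $F(S')$. You invoke ``the usual argument'' --- each edge of length $e$ crosses level-$\ell$ lines $O(e\cdot 2^{\ell}/L)$ times in expectation, hence $E[c(\pi_i,\ell)]\,d_\ell = O(\length(\pi_i))$ --- but you apply it to the \emph{modified} tours $\pi'_i$ of $S'$. That probabilistic edge-crossing estimate averages over the random shifts with the \emph{edge fixed}; it is valid for the edges of $\opt$, which do not see the dissection, but not for the edges of the post-patching tour, which are chosen adaptively \emph{after} the dissection is revealed (portal detours and patching Steiner trees run \emph{along} dissection lines and deterministically pierce perpendicular lines at lower levels). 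The paper isolates this issue in Lemma~\ref{lem:crossings}: patching an edge of a level-$j$ square creates $\Theta(2^{\ell-j})$ new level-$\ell$ crossings for every $\ell>j$, and summing this cascade over $j\le\ell$ yields $E[c(\pi^{L},\ell)]\le O(\log n)\,E[c(\pi,\ell)]$ --- an extra $O(\log n)$ factor that your argument silently omits. With that factor, the per-level penalty is $O(\log n)\opt$ (not $O(\opt)$), the sum over $O(\log n)$ levels is $O(\log^2 n)\opt$, and the $\epsilon/\log^2 n$ weight in Equation~\ref{eq:F} is tuned to absorb exactly $\log^2 n$, giving the stated $O(\epsilon)\opt$ penalty. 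Your claimed $O(\epsilon/\log n)\opt$ is both unproved and a symptom that the cascade of new crossings created by patching has been overlooked; a correct version of your Stage~1 bookkeeping needs either Lemma~\ref{lem:crossings} or an equivalent accounting of the crossings \emph{added} by patching and portal detours.
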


We prove Theorem \ref{thm:black} in section~\ref{sec:blackcost}. Let $\opt^{DP}$ denote the relaxed and i-light CVRP solution that minimizes the extended objective function $F$ given in equation \ref{eq:F}. We present a DP to compute $\opt^{DP}$ in section \ref{section:dynamicprogram}.

%%%%%%%%%%%%%%%%%%%%%%%%%%%%%%%%%%%%%%%%%%%%%%%%%%%%%%%%%%%%%%%%%%%%%%%%%%%%%%%%%%%%%%%%%%%%%%%%
\subsection{A constant factor approximation~\cite{hr85}}\label{sec:3approximation}
For a solution of the red points we use the constant factor approximation of Haimovich and Rinnooy Kan which partitions a traveling salesman tour of the points into tours that cover at most $k$ points \cite{hr85}. Algorithm \ref{alg:3approx} below is a version of the Algorithm presented by \cite{hr85} and it is known to be a $3$-approximation.
 
\begin{algorithm}[htbp]\caption{TSP Partitioning 3-approximation~\cite{hr85}}\label{alg:3approx}
Input: $n$ points and the depot in $\mathbbm{R}^2$ and an integer $k$. 
\begin{algorithmic}[1]
\STATE 
Let $\pi$ denote a tour of input points and the depot obtained using a $2$-approximation of TSP. 
\STATE 
Choose a point $p$ uniformly at random from $\pi$.
\STATE
Go around $\pi$ starting at $p$, and every time $k$ points are visited, take a detour to the depot.
\end{algorithmic}
Output the resulting set of $\lfloor n/k \rfloor + 1$ tours as the CVRP solution.
\end{algorithm}
    
\begin{theorem}\label{thm:3approx} \cite{hr85,aktt}
Let $I$ denote the set of input points, $o$ the depot, and $d(i,o)$ denote the distance of point $i$ from the depot.  Define $\Rad(I) = 2/k \sum_{i \in I}d(i,o)$ and let $TSP(I \cup \{o\})$ denote the length of the minimal tour of $I$ and $o$. We have that: 
\begin{itemize} 
\item $Rad(I) \le \opt$,
\item $TSP(I \cup \{o\}) \le \opt$ and
\item in expectation Algorithm \ref{alg:3approx}'s solution has length $Rad(I) + 2TSP(I \cup \{o\}) \le  3\opt$.
\end{itemize}
\end{theorem}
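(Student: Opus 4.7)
The plan is to establish the three claims in sequence: the two lower bounds on $\opt$ first, then the expected-length bound for the algorithm, each argument resting on the triangle inequality together with a simple averaging or shortcutting step.

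For $\Rad(I) \le \opt$, I would consider any single tour $T$ in an optimal CVRP solution, covering a subset $S \subseteq I$ with $|S| \le k$. Since $T$ begins and ends at the depot, the triangle inequality forces $\length(T) \ge 2\max_{i \in S} d(i,o)$; the maximum is at least the average, so this is at least $(2/|S|)\sum_{i \in S} d(i,o) \ge (2/k)\sum_{i \in S} d(i,o)$. Summing over the optimal tours, whose point-sets partition $I$, yields $\opt \ge (2/k)\sum_{i \in I} d(i,o) = \Rad(I)$. For $TSP(I \cup \{o\}) \le \opt$, I would observe that the optimal CVRP solution is a union of closed tours all sharing the depot: concatenating them at the depot produces a closed walk through every point of $I \cup \{o\}$ of total length $\opt$, which shortcuts to a Hamiltonian tour of $I \cup \{o\}$ of length at most $\opt$.

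For the algorithm's expected cost, I would split the output length into the length of $\pi$ and the cost of the added detours. The 2-approximation guarantee for TSP gives $\length(\pi) \le 2\,TSP(I \cup \{o\})$. Each detour, taken at some point $q \in I$, contributes at most $2d(q,o)$ to the total length, since it replaces a single edge of $\pi$ by a round-trip through the depot. When $p$ is drawn uniformly from the $n$ points of $\pi$, a fixed $q \in I$ is a detour trigger (i.e., lies at cyclic distance a positive multiple of $k$ from $p$ in $\pi$) with probability $\lfloor n/k\rfloor/n \le 1/k$, so the expected total detour cost is at most $\sum_{i \in I}(2/k)d(i,o) = \Rad(I)$. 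Combining with the first two bounds gives expected output length at most $2\,TSP(I\cup\{o\}) + \Rad(I) \le 3\opt$.

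The main point to be careful about is the probability accounting in the last step: one must verify that under the uniformly random starting point, each $q \in I$ really is selected by the deterministic "every $k$-th point" rule with probability at most $1/k$, taking into account the edge case where $n$ is not divisible by $k$. The first two inequalities are essentially immediate from the triangle inequality and shortcutting.
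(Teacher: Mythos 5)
The paper does not prove Theorem~\ref{thm:3approx}; it is cited from Haimovich--Rinnooy Kan and Asano et al.\ \cite{hr85,aktt}, so there is no in-paper argument to compare against. Your proposal reconstructs the standard argument from those references, and all three parts are sound: the averaging over each tour's at-most-$k$ customers together with ``out-and-back to the farthest point'' gives $\Rad(I)\le\opt$; concatenating the optimal tours at the shared depot and shortcutting gives $TSP(I\cup\{o\})\le\opt$; and splitting the ITP cost into $\length(\pi)\le 2\,TSP(I\cup\{o\})$ plus detour surcharges, each bounded by $2d(q,o)$ via the triangle inequality and each point charged with probability at most $1/k$, gives the third bound. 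You are right to flag the probability accounting as the one delicate point: in the cited analyses the random offset is drawn uniformly from $\{1,\dots,k\}$ rather than from the $n$ points, which makes ``each point is a breakpoint with probability exactly $\le 1/k$'' immediate; with the paper's phrasing (uniform $p$ over $\pi$), one should also check that the opening connection $o\to p$ and closing connection to $o$ are either already present in $\pi$ (which visits the depot) or absorbed into the same $2d(\cdot,o)$ charging, so that the number of charged points is indeed $\lfloor n/k\rfloor$ and not $\lfloor n/k\rfloor+1$. This is a bookkeeping detail rather than a conceptual gap.
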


\subsection{Assigning Types}\label{sec:typeassignment}
The 3-approximation on the red points has small length if the $\Rad$ and $TSP$ of the red points have small value. To ensure this is the case, when selecting $y$ points to drop from segment $S$, we choose $y$ such that $\length(y) \le O(\epsilon)\length(S)$, and such that average distance of points in $y$ to the depot is only a $O(\epsilon)$ fraction of the average distance of points on segment $S$ to the depot. Section \ref{sec:redcost} shows that both conditions hold if $y$ is chosen as follows: 

\begin{algorithm}[htbp]\caption{Randomized Type Assignment Procedure}\label{alg:typeassignment}
Input: A tour segment $S$ from a level $\ell$ square $b$ containing $S_a$ active points and requiring $y$ active points to be dropped
\begin{algorithmic}[1]
\STATE 
Select one active point $p$ uniformly at random from $S_a$
\STATE 
Starting at point $p$, select the next $y-1$ points from $S_a$ that lie consecutively after $p$ on the segment $S$; 
if the last active point on the segment before $b$'s boundary is reached without having selected $y$ points, wrap around and select active points from the other end of the tour segment (after $S$ enters $b$)
\STATE 
Label each of the $y$ chosen points with type $\ell$.
\end{algorithmic}
\end{algorithm}

%%%%%%%%%%%%%%%%%%%%%%%%%%%%%%%%%%%%%%%%%%%%%%%%%%%%%%%%%%%%%%%%%%%%%%%%%%%%%%%%%%%%%%%%%%%%%%%%
\subsection{Proof of Theorem \ref{thm:main}}
Lemma~\ref{lem:dp} proves the correctness and running time of a DP that computes $\opt^{DP}$. See section~\ref{section:dynamicprogram} for its proof.  
\begin{lemma}\label{lem:dp}(Dynamic Program) 
Given the set of input points and a randomly shifted dissection, the dynamic program of Section~\ref{section:dynamicprogram} finds an i-light and relaxed CVRP solution that minimizes the objective $F$ defined in Equation~(\ref{eq:F}) in time $n^{\log^{O(1/\epsilon)}n}$. 
\end{lemma}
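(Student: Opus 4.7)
The plan is to run a standard bottom-up dynamic program on the randomly shifted dissection tree. For each dissection square $b$ I define a compact \emph{interface} $\sigma_b$ that records, for every ordered pair of portals $(p,p')$ on $\partial b$ and every threshold $t_i$, the number of rounded segments entering $b$ at $p$, leaving at $p'$, and covering exactly $t_i$ points active at $\level(b)$; the interface additionally lists, for each threshold $t_i$, the at most $\gamma$ unrounded segments whose active-point count lies in $[t_i,t_{i+1})$, each specified by its portal pair and exact count. The DP entry $T[b,\sigma_b]$ is defined as the minimum contribution to $F$ of any portal-respecting partial solution inside $b$ whose boundary behavior agrees with $\sigma_b$ and which satisfies Definition \ref{def:typerespecting} on every dissection square contained in $b$.

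For the base case, the perturbation of Section \ref{sec:preprocess} guarantees that any two customers are at Euclidean distance at least $4$, so each $1\times 1$ leaf square contains at most one customer and $T[b,\sigma]$ can be computed in constant time by direct enumeration. For the recursive step, at a square $b$ of level $\ell$ with children $b_1,\ldots,b_4$, I enumerate compatible child-tuples $(\sigma_{b_1},\ldots,\sigma_{b_4})$: each segment crossing an internal boundary of $b$ must match on portals, matched segments concatenate into longer segments whose active-point counts add, and a concatenated segment whose new count $x$ lies in $[t_i,t_{i+1})$ may be rounded down to $t_i$ by assigning type $\ell$ to $x-t_i$ of its interior points. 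The recurrence sets
\[
T[b,\sigma_b] \;=\; \min_{(\sigma_{b_j})\to \sigma_b}\Bigl( \sum_{j=1}^{4} T[b_j,\sigma_{b_j}] \;+\; \frac{\epsilon}{\log^2 n}\, d_{\ell+1}\cdot (\textrm{internal crossings})\Bigr),
\]
retaining only tuples whose glued behavior in $b$ respects Definition \ref{def:typerespecting}, and the minimum value of $F$ is read off at the root. Backtracking recovers the segments, portal pairings, and type assignments, and i-lightness is enforced either by folding a per-tour portal-usage budget into the interface or, after the fact, by redistributing segments among tours of identical profile (an operation that leaves $F$ invariant).

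For the running-time bound, $\partial b$ carries $4m=O(\log n/\epsilon)$ portals and there are $\tau+1=O(\log^2 n/\epsilon)$ thresholds, so there are $O(\log^4 n/\epsilon^3)$ (portal-pair, threshold) slots; each rounded slot takes an integer value in $\{0,\gamma,2\gamma,\ldots\}\cap[0,n]$, and the unrounded lists contribute an additional factor of at most $n^{O(\gamma\tau)}$. Altogether the number of interfaces per square is $n^{\log^{O(1/\epsilon)}n}$, and the combination step enumerates at most $(n^{\log^{O(1/\epsilon)}n})^4$ child-tuples per parent interface. Multiplied by the $O(n^2)$ dissection squares, this yields the stated $n^{\log^{O(1/\epsilon)}n}$ running time.

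I expect the main obstacle to be the bookkeeping of the gluing-and-rounding step. One has to verify that property~3 of Definition \ref{def:typerespecting} is precisely the local invariant that makes rounding well-defined across the transition from $\level(b)+1$ to $\level(b)$, that the $\gamma$-multiplicity rule on rounded segments is preserved automatically by the recurrence, and that the additive decomposition of $F$ over squares and levels really matches the DP combination formula. Tracking which interior points receive type $\ell$ during rounding, so that backtracking can be coupled with the randomized type-assignment procedure of Algorithm \ref{alg:typeassignment}, will be the subtlest piece, but it is routine once the interface is set up correctly.
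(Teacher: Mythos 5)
Your DP is essentially the one in the paper: per-square tables indexed by a list that records, for each portal pair and each threshold, the multiplicity of rounded segments and the exact active-point counts of a bounded number of unrounded segments, filled bottom-up by gluing child tables and rounding at each level. However, you elide the one piece that actually produces the exponent $\log^{O(1/\epsilon)}n$. The paper does not merely iterate over four-tuples of child configurations; it introduces a \emph{profile} $\Phi$ for each segment of $b$, namely a list of at most $4r+1$ tuples describing the pieces of that segment inside the children $b_1,\dots,b_4$ (which portals it enters and exits, and whether each piece is rounded at $t_i$ or is the $j$-th unrounded piece), and then enumerates an \emph{interface vector} $I=(n_\Phi)_\Phi$ counting segments per profile. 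This interface vector is what resolves the gluing: with only $(\sigma_{b_1},\dots,\sigma_{b_4},\sigma_b)$ in hand, there are combinatorially many pairings of child segments at internal portals and it is not clear how you would decide whether some pairing yields $\sigma_b$; the interface vector makes the pairing explicit and checkable in polynomial time. It is also the true bottleneck: the number of profiles is $D=\log^{O(r)}n=\log^{O(1/\epsilon)}n$ because $r=O(1/\epsilon)$ pieces are concatenated, so the number of interface vectors is $n^D=n^{\log^{O(1/\epsilon)}n}$, whereas the number of configurations per square is only $n^{\log^{O(1)}n}$. Your analysis attributes the dominant cost to the configuration count rather than to the interface/profile count, so your final bound happens to be right while the attribution is off and the gluing step you flag as ``the main obstacle'' is left unresolved. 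Two further small points: after the perturbation, distinct customers may coincide (distance $0$ is allowed), so a leaf square can hold several coincident customers and the base case is slightly more than ``at most one point''; and neither your second option for i-lightness (redistributing segments among tours of identical profile) nor the paper's own DP description explicitly certifies that the segment-level constraint ``at most $4r$ internal crossings'' assembles into per-tour lightness, though this is implicit in the way both accounts treat segments.
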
 

The length of the solution output by Algorithm \ref{alg:ktour} will be the sum of the lengths of the black tours and red tours. Since the black tours are obtained by dropping points from the DP solution they have length at most $\opt^{DP}$, which by Theorem \ref{thm:black} is at most $(1 + O(\epsilon))\opt$. Theorem \ref{thm:red}, which is proved in section~\ref{sec:redcost}, shows that the length of the red tours is $O(\epsilon)\opt$.

\begin{theorem}\label{thm:red} In expectation over the random shifts of the dissection and the random type assignment the length of the red tours output by Algorithm \ref{alg:ktour} is $O(\epsilon)\opt$.
\end{theorem}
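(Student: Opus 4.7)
My plan is to apply Theorem~\ref{thm:3approx} to the red point set $R$, which reduces the problem to separately bounding $E[\Rad(R)]$ and $E[\TSP(R \cup \{o\})]$, each by $O(\epsilon)\opt$. The expectations are taken over the random shifts of the dissection, the randomness in Algorithm~\ref{alg:typeassignment}, and the random starting point in Algorithm~\ref{alg:3approx}.

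For $E[\Rad(R)]$, I would exploit the symmetry built into Algorithm~\ref{alg:typeassignment}: conditional on a point $p$ being active at level $\ell$ and lying on a segment $S$ of a level-$\ell$ square from which $y$ points are to be dropped, every active point of $S$ is equally likely to be chosen, so $\Pr[p \text{ gets type } \ell \mid p \text{ active at level } \ell] = y/S_a \le \epsilon/\log n$ by part~3 of Definition~\ref{def:typerespecting}. A union bound over the $O(\log n)$ levels at which $p$ could be dropped gives $\Pr[p \in R] = O(\epsilon)$. By linearity of expectation, $E[\sum_{r \in R} d(r,o)] \le O(\epsilon)\sum_{p \in I} d(p,o) \le O(\epsilon)(k/2)\opt$, where the last inequality is $\Rad(I) \le \opt$ from Theorem~\ref{thm:3approx}. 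Dividing by $k/2$ yields $E[\Rad(R)] \le O(\epsilon)\opt$.

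For $E[\TSP(R \cup \{o\})]$, I would construct an explicit walk on $R \cup \{o\}$ and bound its expected length. On each segment $S$ at level $\ell$ that is rounded, the dropped points form a contiguous sub-arc whose expected length is at most $(y/S_a)\length(S) \le (\epsilon/\log n)\length(S)$, since the sub-arc's starting point is uniform on $S$. Segments at a fixed level partition the DP tours, so summing over segments at level $\ell$ gives total expected sub-arc length at most $(\epsilon/\log n)\length(\opt^{DP})$, and summing over the $O(\log n)$ levels yields total expected sub-arc length $O(\epsilon)\length(\opt^{DP}) = O(\epsilon)\opt$. To stitch these sub-arcs into a single walk passing through the depot, I would use connector paths built from the DP tour structure and the portal grid, then charge the connector length against the crossing term in the extended objective $F$ of equation~(\ref{eq:F}). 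Since Theorem~\ref{thm:black} gives $F(\opt^{DP}) \le (1+O(\epsilon))\opt$ and part~2 of Definition~\ref{def:typerespecting} guarantees that any square in which rounding happens already contains at least $\gamma = \log^9 n/\epsilon^4$ segments of some threshold (and therefore many boundary crossings weighted in $F$), each red sub-arc can be connected by charging its connector to a small fraction of these crossings.

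The main obstacle I expect is precisely this connector-charging step in the TSP bound: one must set up the amortization so that the inverse crossing weight $\log^2 n/\epsilon$ in $F$, the factor $1/\gamma$ from the number of segments sharing a rounding square, and the $O(\log n)$ depth of the dissection tree combine to yield $O(\epsilon)\opt$ rather than a larger polylogarithmic factor. The rather lopsided choice $\gamma = \log^9 n/\epsilon^4$ suggests that the intended charging argument has generous slack in both the $\log n$ and $1/\epsilon$ exponents; the delicate parts will be making sure that the connector paths can be organized into a bona fide walk (not just a disjoint collection of paths), and that the random type assignment gives not only small sub-arcs in expectation but also connector endpoints whose depot distances behave well under the same symmetry argument used for $\Rad(R)$.
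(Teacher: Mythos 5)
Your decomposition is exactly the paper's: apply Theorem~\ref{thm:3approx} to $R$, then bound $E[\Rad(R)]$ and $E[\TSP(R \cup \{o\})]$ separately by $O(\epsilon)\opt$; the paper's Lemmas~\ref{lem:rad} and~\ref{lem:tsp} do just that, and your $\Rad$ argument via the uniform-start symmetry of Algorithm~\ref{alg:typeassignment} and Property~\ref{prop:probofred}, with a union bound over $O(\log n)$ levels, is essentially the paper's proof of Lemma~\ref{lem:rad}.

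For $\TSP(R \cup \{o\})$ you have the right skeleton---each level's red sub-arcs have expected length $O(\epsilon/\log n)\opt^{DP}$ (that is the paper's Claim~\ref{claim:redlengths} via Property~\ref{prop:epsilonlength})---but the ``connector-charging step'' you flag as the main obstacle is genuinely unresolved in your sketch, and the paper handles it with two specific ingredients you don't have. First, the cost of reaching the level-$\ell$ squares $B_\ell$ from the depot (the ``outside cost'', paper's Claim~\ref{claim:outsidecost}) is \emph{not} charged to the crossing term of $F$; it is bounded by an LP integrality-gap argument: since $\opt^{DP}$ is i-light and each $b\in B_\ell$ admits $\ge\gamma$ rounded segments, at least $\gamma/4r$ distinct $\opt^{DP}$ edges cross every cut separating the depot from $B_\ell$, so scaling the MST relaxation yields $MST(C\cup\{o\}) \le O(r/\gamma)\opt^{DP}$. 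Your plan to charge everything to crossings in $F$ at level $\ell$ cannot recover this, because the distance from a square to the far-away depot is not reflected in that square's level-$\ell$ boundary crossings. Second, for the ``inside cost'' (paper's Claim~\ref{claim:connectioncost}), stitching the $\gamma$ red intervals in a group to $b$'s boundary costs $O(d_\ell\sqrt{\gamma})$ via the $\TSP(U)=O(d_{\max}\sqrt{|U|})$ bound (Theorem~\ref{thm:tsp}), so the per-group gain over the $\ge 2\gamma$ crossings is only a factor $1/\sqrt{\gamma}$, not the $1/\gamma$ you pencil in. Your accounting $(\log^2 n/\epsilon)\cdot(1/\gamma)\cdot O(\log n)$ overstates the slack; the actual bound is $(\log^2 n/\epsilon)\cdot(1/\sqrt{\gamma})\cdot O(\log n)$, and with $\gamma=\log^9 n/\epsilon^4$ this is $O(\epsilon/\sqrt{\log n})$, which still suffices but explains why $\gamma$ had to be taken so large. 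Without the square-root TSP bound and the LP argument for the portal MST, your walk construction does not close.
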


Thus the solution output by Algorithm \ref{alg:ktour} has total length $(1 + O(\epsilon))\opt$. The DP dominates the running time. The derandomization of the Algorithm is discussed in section \ref{sec:derandomize}.

%%%%%%%%%%%%%%%%%%%%%%%%%%%%%%%%%%%%%%%%%%%%%%%%%%%%%%%%%%%%%%%%%%%%%%%%%%%%%%%%%%%%%%%%%%%%%%%
%%%%%%%%%%%%%%%%%%%%%%%%%%%%%%%%%%%%%%%%%%%%%%%%%%%%%%%%%%%%%%%%%%%%%%%%%%%%%%%%%%%%%%%%%%%%%%%%
%%%%%%%%%%%%%%%%%%%%%%%%%%%%%%%%%%%%%%%%%%%%%%%%%%%%%%%%%%%%%%%%%%%%%%%%%%%%%%%%%%%%%%%%%%%%%%%%
\section{The Dynamic program}\label{section:dynamicprogram}
%%%%%%%%%%%%%%%%%%%%%%%%%%%%%%%%%%%%%%%%%%%%%%%%%%%%%%%%%%%%%%%%%%%%%%%%%%%%%%%%%%%%%%%%%%%%%%%%
%%%%%%%%%%%%%%%%%%%%%%%%%%%%%%%%%%%%%%%%%%%%%%%%%%%%%%%%%%%%%%%%%%%%%%%%%%%%%%%%%%%%%%%%%%%%%%%%
This section presents the quasi-polynomial time dynamic program and proves Lemma~\ref{lem:dp}. The dynamic program (DP) finds $\opt^{DP}$, the relaxed and i-light solution that minimizes the extended objective from Equation~(\ref{eq:F}).

\smallskip
\noindent
{\bf The DP table.} A {\em configuration} $C$ of a dissection square $b$ is a list of entries: for each pair of portals $p,q$, the configuration has two sublists of entries, one to record information about rounded tour segments and one to record information about unrounded tour segments:
\begin{enumerate}
\item first sublist: $(r_1^{p,q}, \ldots, r_i^{p,q}, \ldots r_{\tau}^{p,q})$, where $r_i^{p,q}$ is the number of rounded tour segments that use portals $p$ and $q$ and cover exactly $t_i$ active points
\item second sublist: $(u_1^{p,q}, \ldots, u_j^{p,q}, \ldots, u_{\gamma \cdot \tau}^{p,q})$, where $u_{j}^{p,q}$ is the number of active points covered by the $j$-th unrounded tour segment that uses portals $p$ and $q$ 
\end{enumerate}

The DP has a table entry $L_b[C]$ for each dissection square $b$ and each configuration $C$ of $b$. The table cell $L_b[C]$ stores the minimum cost (according to objective $F$ as defined in Equation \ref{eq:F}) of placing tour segments in $b$ in a way which is compatible with $C$ and with the relaxed and i-light Definitions~\ref{def:typerespecting} and \ref{def:ilight}. $\opt^{DP}$ is the minimum table entry over all configurations of the root level square. 

\smallskip
\noindent
{\bf Computing the table entries.} The table entries are computed in bottom-up order, in the following manner. 

%Consider the base case of computing the value at some configuration $C_0$ for a leaf square $b$.  As $b$ is a 
%leaf node all customer points are in the center of the square. The entries of configuration $C_0$ describe the 
%total number of tour segments for $b$, the portals that each segment should use and how many points each segment 
%should pick up. Thus $L_b[C_0]$ can be computed in $O(\log^{12}n)$ time by reading these entries.

Inductively, let $b$ be a square at level $\ell$ and let $b_1, b_2, b_3, b_4$ be the children of $b$ at level $\ell +1$. Since every tour is i-light, a tour segment inside $b$ crosses the boundaries inside $b$ between $b_1, b_2, b_3, b_4$ at most $4r$ times, and always through portals. 
%Let $m_1, \ldots m_v$ for ($v\le 4r$) be those portals, in their order along the segment, and let $m_0=p$, %$m_{v+1}=q$ be the portals of $b$ where the segment enters and exits $b$. 
Thus the segment is the concatenation of at most $4r+1$ pieces, where a piece goes from some portal $m_i$ to some portal $m_{i+1}$ in one of the children of $b$. By the relaxed definition applied to level $\ell +1$, each piece is either rounded or one of the $\gamma  \tau$ unrounded tours inside a child of $b$. Thus every piece can be described by a tuple $(p,q,x)$, where $p,q$ are portals and $x$ is either one of the rounded threshold values $t_i$ for some $ i<\tau$ or a number $j \le \tau\gamma$ indicating it is the $j$-th unrounded tour in a child square of $b$. The {\em profile} $\Phi= (p, m_1, n_2), (m_1, m_2, n_1), \ldots (m_v, p', n_v)(f)$ of the segment is the list of those $4r+1$ tuples (representing tour segment pieces), plus a flag $f$ which is true iff the segment is rounded.
%$$\Phi = (p, m_1, n_2), (m_1, m_2, n_1), \ldots (m_v, p', n_v)(f)$$  
Consider a profile $\Phi$ of $b$ with flag $f=1$. Suppose that the concatenation of the pieces described by $\Phi$ contains $x$ active points. Let $t_i$ be the threshold value defined by: $t_i\leq x < t_i(1 + \epsilon/\log n)$. Then the DP counts this segment as having $t_i$ active points. 

%While tracing back through the DP history we will be able to infer that for this segment exactly $x-t_i$ points should be labelled with type $\ell$.

Let $D$ denote the number of possible profiles for a segment in square $b$. For each profile $\Phi$, let  $n_{\Phi}$ denote the number of tour segments in $b$ with profile $\Phi$. An \emph{interface vector} $I=( n_{\Phi })_{\Phi }$ is a list of $D$ entries.
 Intuitively the vector $I$, provides the interface between how tour segments in $b$ are formed by concatenating the segments of $b$'s children. 

\smallskip 

Let $C_0$ be a configuration for square $b$. The calculation of $L_b(C_0)$ is done in a brute force manner by iterating through all possible values of the interface vector $I$ and all possible combinations of configurations in $b$'s children, $C_1, C_2, C_3, C_4$.  
A combination $C_0, I, C_1, C_2, C_3, C_4$ is {\em consistent} if $I$ describes at most $\gamma\cdot\tau$ unrounded segment and if gluing $C_1, C_2, C_3, C_4$ according to $I$ yields configuration $C_0$. 

%To check consistency of a combination, we first verify that $I$ and $C_0$ describe exactly the same set of tour 
%segments by comparing the entries of $I$ and $C_0$. Then we check that $I$ is compatible with each child 
%configuration $C_1, C_2, C_3, C_4$ by checking that the projection of $I$ on to child $b_i$ describes the same 
%tour segment as $C_i$. 

The cost of a consistent combination is computed in the following way: The cost of configurations $C_1, \ldots C_4$ is stored in lookup tables $L_{b_i}(C_i)$, $1\leq i\leq 4$. Let $c_b$ denote the total number of tour segments in $b$ as specified by $I$ (also specified by $C_0$).
% then $c_b(I) = \sum_{i=0}^D n_{i}$, where each $n_i$ is an entry of $I$. Each tour segment inside $b$ adds $2$ 
%crossings with a level $l$ square.  Thus 
The value of objective function $F$, as defined by equation \ref{eq:F} of $(C_1, C_2, C_3, C_4, I)$ is the sum of the costs of $C_i$ for child square $b_i$, plus $(\epsilon/\log^2n)2c_b(I)$.   The value of $L_b(C_0)$ is given by the tuple $(C_1, C_2, C_3, C_4, I)$ consistent with $C_0$ with minimum cost. 
%i.e:
%$$\label{eq:dpmin}
%L_{b}(C_0) = \min_{C_1, C_2, C_3, C_4, I} \left\{\frac{2c_b(I)\cdot \epsilon}{\log^2n}+ \sum_{i=1}^4 L_{b_i}%(C_i)  \mid (C_1, C_2, C_4, I) \textrm{ is compatible with } C_0\right\}
%$$
%$$
%\textrm{ where  } c_b(I) = \sum_{i=1}^{D} n_i \quad \textrm{ and }n_i \textrm{ is an entry of } I 
%$$

\smallskip
\noindent
{\bf Analysis of the dynamic program.}
How many possible configurations are there for a square $b$? There are $O(\log^2n)$ different pairs of portals ($p,q$); for each $(p,q)$, there are $\tau$ entries $r_{i}^{p,q}$ in the first sublist and $\tau\gamma$ entries $u_{j}^{p,q}$ in the second sublist. Thus a configuration of $b$ is a list of $O(\tau \gamma \log^2n)$ entries. Each entry (the number of segments $r_{i}^{p,q}$ or number of points $u_{j}^{p,q}$) is an integer between $0$ and $n$, thus the total number of configurations for square $b$ is  $n^{O(\tau \gamma \log^2n)} = n^{O(\log^{12} n)}$. As there are $O(n^2)$ dissection squares, the DP table has size  $n^{O(\log^{12}n)}$  overall. 

How many possible profiles are there for a segment in square $b$? $\Phi$ has a list of $O(r)$ tuples $(p,p',x)$. There are $O(\log^2n)$ choices of portals $p,p'$ and $O(\tau\gamma)$ choices of $x$, so there are $O(\tau\gamma\log^2n)$ possibilities for each tuple. The flag doubles the number of profiles so there are $D =(\log^{12}n)^{O(r)}$ possible profiles. As $r = O(1/\epsilon)$, $D=\log^{O(1/\epsilon)}n$. 

How many possible interfaces are there for a square $b$? At most $n^D$, since each $ n_{\Phi }$ is in $[0, n]$. This means we have only a quasi-polynomial number of possibilities for the interface vector $I$ for square $b$.

Checking for consistency takes time polynomial in the size of the list of entries in $I$ and $C_i$. 

There are $n^{\log^{O(1)}n}$ possible values for each $C_i$ and $n^{\log^{O(1/\epsilon)}n}$ possible values for $I$. Thus in total it takes time polynomial in $n^{\log^{O(1/\epsilon)}n}$ to run through all combinations of $I, C_1, C_2, C_3, C_4$ and to compute the lookup table entry at $L_b[C_0]$.

\smallskip
\noindent
{\bf Remark.}
  The DP verifies the existence of a type-assignment satisfying definition \ref{def:typerespecting} but does not actually label points with a specific type. It merely records the number of active points that tour segments it constructs should have. Once the cost of $\opt^{DP}$ solution is found, we can trace through DP solution's history, and find a valid type assignment by looking at the decisions made by the DP. In fact the type assignment can be done during the same time that the tours of $\opt^{DP}$ are constructed.  For example while constructing $\opt^{DP}$, if we build a tour segment with $x$ active points at level $\ell$ but the DP's history recorded the segment as having $t$ active points, we can choose any $x-t$ active points from the segment and label them with type $\ell$. \footnote{Labelling \emph{any} active points on the segment with type $l$ will satisfy the relaxed CVRP definition. But we use the randomized type assignment procedure \ref{alg:typeassignment} to ensure that a solution on these points will have small cost.}

%%%%%%%%%%%%%%%%%%%%%%%%%%%%%%%%%%%%%%%%%%%%%%%%%%%%%%%%%%%%%%%%%%%%%%%%%%%%5
\section{Proof of Structure Theorem}\label{sec:blackcost}
%%%%%%%%%%%%%%%%%%%%%%%%%%%%%%%%%%%%%%%%%%%%%%%%%%%%%%%%%%%%%%%%%%%%%%%%%%%%5
Let $\opt^L$ denote the i-light solution of minimum length. In Lemma \ref{lem:foptdp} we show that $F(\opt^{DP}) \le (1 + O(\epsilon))\opt^L$, where $F$ is defined in equation~\ref{eq:F}. As $\opt^{DP}$ is at most $F(\opt^{DP})$, we get that $\opt^{DP} \le (1 + O(\epsilon))\opt^L$.  Then we apply Corollary \ref{cor:aroraExtention}, given below, to show that $\opt^{DP} \le (1+O(\epsilon)\opt)$ and prove our structure Theorem. Corollary \ref{cor:aroraExtention} is a simple generalization of Arora's structure Theorem that shows that $\opt^L$ is near optimal. 

\begin{corollary}(Generalization of Arora)\label{cor:aroraExtention} In expectation over the random shifts of the dissection,
$E[\opt^{L}] \le (1 + O(\epsilon))\opt$
\end{corollary}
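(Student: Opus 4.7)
The plan is to apply (an expectation version of) Arora's structure theorem tour-by-tour to $\opt$ rather than globally to the entire CVRP solution. Write $\opt$ as a fixed collection of tours $(\pi_i)$, each visiting the depot and at most $k$ customers. Each $\pi_i$, viewed in isolation, is a closed Euclidean tour, and can therefore be treated as its own TSP instance with respect to the same randomized dissection $D$.

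The key step is to upgrade Theorem \ref{thm:arora} from a ``with probability at least $1/2$'' existence statement to an expectation bound: Arora's proof applies a patching lemma level by level and, using $\Pr[\level(l)=\ell]=2^\ell/L$ from Equation~(\ref{eq:prline}), bounds the expected total patching cost needed to make any fixed tour $\pi$ light with respect to $D$ by $O(\epsilon)\length(\pi)$. Applying this bound to each $\pi_i$ produces a light tour $\pi_i^L$ on the same point set (the depot plus the customers of $\pi_i$) with
\[
E[\length(\pi_i^L)] \le (1+O(\epsilon))\length(\pi_i),
\]
where the expectation is over the random shifts defining $D$.

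The family $(\pi_i^L)$ is then a feasible CVRP solution: each $\pi_i^L$ is a closed tour through the depot covering at most $k$ customers, and the union covers all $n$ customers since $\opt$ does. By construction each $\pi_i^L$ is light, so $(\pi_i^L)$ is i-light in the sense of Definition~\ref{def:ilight}, and in particular $\opt^L \le \sum_i \length(\pi_i^L)$. Taking expectations and using linearity over the deterministic partition of customers induced by $\opt$ yields
\[
E[\opt^L] \le \sum_i E[\length(\pi_i^L)] \le (1+O(\epsilon))\sum_i \length(\pi_i) = (1+O(\epsilon))\opt.
\]

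The main obstacle, such as it is, is the expectation strengthening of Theorem~\ref{thm:arora}. It is implicit in Arora's patching argument, but because our application sums the bound over an a priori unbounded number of tours, the ``with probability $1/2$'' form stated in the excerpt is not directly useable; instead one must read off the per-tour expected patching cost from the proof and verify that the $O(\epsilon)\length(\pi)$ bound is indeed linear in $\length(\pi)$, so that summing over $i$ collapses to $O(\epsilon)\opt$.
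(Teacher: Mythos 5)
Your proof takes essentially the same approach as the paper's: decompose $\opt$ into its individual tours, apply Arora's structure theorem to each one as a separate TSP instance over the shared random dissection $D$, and combine by linearity of expectation, then observe that the resulting collection is i-light and hence upper-bounds $\opt^L$. You are additionally careful to note that Theorem~\ref{thm:arora} as stated is only a ``probability at least $1/2$'' claim and that one must instead invoke the underlying per-tour expected-patching-cost bound from Arora's argument in order to sum over an unbounded number of tours; the paper's one-line proof glosses over this, so your more explicit treatment is a refinement of the same route rather than a different one.
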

%%%%%%%%%%%%%%%%%%%%%%%%%%%%%%%%%%%%%%%%%%%%%%%%%%%%%%%%%%%%%%%%%%%%%%%%%%%%%%%%%%%%%%%%%%%%%%%%
\begin{proof}
Let $\opt^L$ consist of a set of tour $\pi = \pi_1, \ldots \pi_m$. Apply Arora's structure Theorem \ref{thm:arora} to each tour, sum, and use linearity of expectation. 
\end{proof}

\begin{lemma}\label{lem:foptdp}
 In expectation over the random shifted dissection, $F(\opt^{DP}) \le (1 + O(\epsilon))\opt^L$.
\end{lemma}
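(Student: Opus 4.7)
The plan is to exhibit a specific i-light and relaxed CVRP solution $\hat\pi$ whose extended cost satisfies $E[F(\hat\pi)] \le (1+O(\epsilon))\opt$ (expectation over the random dissection). Since $\opt^{DP}$ minimizes $F$ over the class of i-light relaxed CVRP solutions, $F(\opt^{DP})\le F(\hat\pi)$ pointwise; combining with $\opt \le \opt^L$ pointwise (as $\opt^L$ is a feasible CVRP solution) yields $E[F(\opt^{DP})] \le (1+O(\epsilon)) E[\opt^L]$, which is the statement.

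Take the tour geometry of $\hat\pi$ to be the i-light solution $\pi^{(A)}$ obtained from $\opt$ by applying Theorem~\ref{thm:arora} tour-by-tour, giving $E[\length(\pi^{(A)})] \le (1+O(\epsilon))\opt$ exactly as in the proof of Corollary~\ref{cor:aroraExtention}. To make $\hat\pi$ relaxed, I produce a type assignment inductively down the dissection tree. Initialize every point with type $-1$. For each square $b$ at level $\ell$, examine each segment of $\hat\pi$ in $b$, read off its count $x$ of points active at level $\ell+1$, and identify the unique bucket $i$ with $t_i \le x < t_{i+1}$. Within each bucket, keep at most $\gamma$ segments unrounded and round the remainder so that the rounded count at threshold $t_i$ is an integer multiple of $\gamma$; on each rounded segment, use Algorithm~\ref{alg:typeassignment} to label exactly $x-t_i$ of its active points with type $\ell$. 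Property (1) of Definition~\ref{def:typerespecting} holds because we only raise types and each tour of $\opt$ covers $\le k$ customers; property (2) holds by construction; property (3) holds because $x - t_i < t_i \cdot \epsilon/\log n$ gives $x \le t_i(1+\epsilon/\log n)$.

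To bound $E[F(\hat\pi)]$, the length contribution is $(1+O(\epsilon))\opt$ in expectation. For the penalty $(\epsilon/\log^2 n)\sum_\ell\sum_i c(\hat\pi_i,\ell)\,d_\ell$, I apply the standard Arora charging argument to the fixed tour $\opt$: by equation~(\ref{eq:prline}), each integer line $l$ crossed by a tour edge contributes $E[d_{\level(l)}] = \sum_\ell (2^\ell/L)\cdot(L/2^\ell) = O(\log n)$ in expectation, so summing over edges of $\opt$ gives $E[\sum_\ell c(\opt,\ell)\,d_\ell] = O(\log n \cdot \opt)$. Arora's construction of $\pi^{(A)}$ from $\opt$ does not increase this sum beyond a constant factor, since portal-shifting preserves the dissection line of each crossing and patching adds only $O(\epsilon)\opt$ of extra length contributing $O(\epsilon \log n)\opt$ to the sum. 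The expected penalty is therefore $(\epsilon/\log^2 n)\cdot O(\log n \cdot \opt) = O(\epsilon)\opt$, so $E[F(\hat\pi)] \le (1+O(\epsilon))\opt$.

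\textbf{Main obstacle.} The subtlest step is ensuring that the type assignment respects all three conditions of Definition~\ref{def:typerespecting} simultaneously across all $O(\log n)$ levels. The trouble spot is the lowest bucket $[t_0,t_1)=[1,1/\epsilon)$: rounding a bucket-$0$ segment with $x\ge 2$ active points down to $t_0 = 1$ would require labelling $x-1$ points with type $\ell$, grossly violating property (3). Such segments must therefore remain unrounded, and a separate argument is needed to show that the count of bucket-$0$ segments inside any single square $b$ is already at most $\gamma$ so that property (2) is automatically met, either via a direct structural bound or by amortizing the excess against the cost of the red tours produced by Algorithm~\ref{alg:3approx}.
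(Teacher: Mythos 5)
Your overall strategy mirrors the paper's: exhibit an i-light, relaxed solution with small extended cost and invoke the optimality of $\opt^{DP}$ over that class. The paper works directly with $\opt^L$ (to which it applies Lemma~\ref{lem:assigntypes}), while you rebuild a light solution $\hat\pi$ from $\opt$ via Theorem~\ref{thm:arora}; that substitution is fine. There are, however, two substantive problems with your execution, plus one observation worth keeping.

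\textbf{Direction of the type assignment.} You assign types ``inductively down the dissection tree,'' i.e.\ from the root toward the leaves. This cannot work. When you process a level-$\ell$ square you must know, for each segment, how many points are \emph{active at level $\ell+1$}; by definition a point is active at level $\ell+1$ iff its final type is $< \ell+1$, and going top-down you have not yet decided which points will receive types $\ell+1,\ell+2,\dots$. Rounding a segment to exactly $t_i$ active points at level $\ell$ is therefore unstable: later assignments at descendant squares can strip away points you counted, leaving fewer than $t_i$ active points and violating condition (2) of Definition~\ref{def:typerespecting}. The paper's Lemma~\ref{lem:assigntypes} works bottom-up from $\ell_{\max}$ to $0$ precisely so that the active-point counts are fixed before a square is processed.

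\textbf{The penalty bound is too informal.} You bound the crossing penalty by arguing that each unit of length contributes $O(\log n)$ in expectation to $\sum_\ell c(\cdot,\ell)\,d_\ell$, and that Arora's patching adds only $O(\epsilon)\opt$ extra length. The difficulty is that the patching and detour segments are chosen \emph{after} the dissection is fixed, so they are not independent of the shift; the ``$O(\log n)$ expected contribution per unit length'' calculation from equation~(\ref{eq:prline}) applies cleanly only to the pre-existing edges of $\opt$, not to shift-dependent segments that deliberately run along dissection lines. The paper handles this with Lemma~\ref{lem:crossings} (a per-level bound $E[c(\pi^L,\ell)]\le O(\log n)\,E[c(\pi,\ell)]$, obtained by carefully counting new crossings caused by patching and detours at coarser levels) together with Lemma~\ref{lem:lbopt}; the two $\log n$ factors are exactly what the $\epsilon/\log^2 n$ coefficient in $F$ is designed to absorb. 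Your argument asserts a tighter, $O(\log n)$-total bound without this accounting, and the reasoning given does not justify it.

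\textbf{Bucket $[t_0,t_1)$ ``main obstacle.''} This observation is legitimate, and interestingly it applies to the paper's own proof of Lemma~\ref{lem:assigntypes}, which asserts $t_i(1+\epsilon/\log n)=t_{i+1}$ uniformly --- an identity that fails at $i=0$ since $t_0=1$ and $t_1=1/\epsilon$. The paper does not address this explicitly, and you correctly note that a separate argument (or a fix such as never rounding bucket-$0$ segments and bounding their count, or inserting additional thresholds below $1/\epsilon$) is needed. You flag it but do not resolve it; neither does the paper.
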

\begin{proof}
To start comparing $\opt^{DP}$ and $\opt^{L}$, we apply Lemma \ref{lem:assigntypes} to turn $\opt^{L}$ into a solution that satisfies the relaxed definition \ref{def:typerespecting}. See appendix \ref{sec:relaxed} for proof of Lemma \ref{lem:assigntypes}

\begin{lemma}\label{lem:assigntypes}
Let $S$ be a CVRP solution on input $I$. There exists an assignment of types to points, that turns $S$ into a solution that satisfies definition \ref{def:typerespecting}. The tours of $S$ are not modified and the length of $S$ remains unchanged.
\end{lemma}  

Now $\opt^{DP}$ and $\opt^L$ are both i-light and relaxed solutions. As $\opt^{DP}$ also minimizes objective function $F$, we have
$$F(\opt^{DP}) \le F(\opt^{L}) =  \opt^{L} + (\epsilon/\log^2n)\sum_{ \textrm{ level } \ell} c(\pi^{L}, \ell) d_{\ell}$$  
where $\pi^L$ are the tours of $\opt^L$. Now we only need to show that the last term summing the number of crossings in $\opt^L$ is $O(\epsilon)\opt^{L}$ in expectation. Lemma \ref{lem:crossings} allows us to bound the number of crossings in $\opt^L$ in terms of the number of crossings in $\opt$, and Lemma \ref{lem:lbopt} allows us to charge each crossing to the length of $\opt$. See appendix \ref{sec:extentionsArora} for the proofs.

\begin{lemma}\label{lem:lbopt} In expectation over the random dissection, for any level $\ell$
$\opt \ge O(d_{\ell}) E[(c(\pi, \ell)]$.
\end{lemma}

\begin{lemma} \label{lem:crossings} For a random dissection at any level $\ell$, $E[ c(\pi^{L}, \ell)] \le O(\log n) E[ c(\pi, \ell)]$.
\end{lemma}

Applying Lemma \ref{lem:crossings} we get,  
\begin{equation}\label{eq:costdp1}
(\epsilon/\log^2n)\sum_{ \textrm{ level } \ell} E[c(\pi^{L}, \ell)] \cdot d_{\ell} \le (\epsilon/\log^2n)\sum_{ \textrm{ level } \ell} O(\log n \cdot d_{\ell})E[ c(\pi, \ell)] 
\end{equation}

By Lemma \ref{lem:lbopt}, $E[c(\pi, \ell)] O(d_{\ell})$ is at most $\opt$ and as $\ell_{max}=O(\log n)$, Equation \ref{eq:costdp1} is $O(\epsilon)\opt$. This proves the Lemma as $\opt \le \opt^{L}$. 
\end{proof}

%%%%%%%%%%%%%%%%%%%%%%%%%%%%%%%%%%%%%%%%%%%%%%%%%%%%%%%%%%%%%%%%%%%%%%%%%%%%5
\section{Proof of Theorem~\ref{thm:red}}\label{sec:redcost}
Let $R$ denote set of the points  marked red by Algorithm \ref{alg:ktour}. By Theorem \ref{thm:3approx}, the $3$-approximation on $R$ has cost at most $\Rad(R) +\TSP(R \cup \{o\})$, where $o$ is the depot. Lemmas \ref{lem:rad} and \ref{lem:tsp} proves Theorem \ref{thm:red} by showing that in expectation both quantities are $O(\epsilon)\opt$.  

\begin{lemma} \label{lem:rad} In expectation over the random type assignment, $\Rad(R) = O(\epsilon)\opt$ \end{lemma}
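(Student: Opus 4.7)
The plan is to bound $\mathbb{E}[\mathrm{Rad}(R)]$ by linearity of expectation, computing the per-point probability that a point is red. By definition $\mathrm{Rad}(R) = (2/k)\sum_{r \in R} d(r,o)$, so it suffices to show $\mathbb{E}[\sum_{r\in R} d(r,o)] \le O(\epsilon)\sum_{p \in I} d(p,o)$, which then gives $\mathbb{E}[\mathrm{Rad}(R)] \le O(\epsilon)\mathrm{Rad}(I) \le O(\epsilon)\opt$ via Theorem~\ref{thm:3approx}.

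First I would decompose the event $\{p \in R\}$ by level: a point becomes red precisely when it is assigned some non-negative type, and this happens at exactly one level $\ell$ (namely the level at which it ceases to be active). Thus $\Pr[p \in R] = \sum_{\ell=0}^{\lmax} \Pr[p \text{ is given type } \ell]$. Fix a level $\ell$ and condition on the DP solution (which is deterministic given the dissection); the only randomness left is the type assignment inside each segment. If $p$ lies in a segment $S$ (at level $\ell$) with $S_a$ active points from which $y$ must be dropped, then by the cyclic wrap-around symmetry in Algorithm~\ref{alg:typeassignment}, each active point on $S$ is selected with probability exactly $y/S_a$. Therefore
\begin{equation*}
\mathbb{E}\Bigl[\sum_{p \text{ dropped at level }\ell} d(p,o)\Bigr] \;=\; \sum_{S \text{ segment at level }\ell} \frac{y_S}{(S_a)_S}\sum_{q \text{ active on }S} d(q,o).
\end{equation*}

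Next I would invoke property~3 of Definition~\ref{def:typerespecting}: since the rounded segment $S$ at level $\ell$ will carry $t$ active points where $t(1+\epsilon/\log n) \ge S_a$ (equivalently $y = S_a - t \le S_a \cdot \epsilon/\log n$), we obtain $y_S/(S_a)_S \le \epsilon/\log n$ uniformly. Pulling this factor out, the remaining sum $\sum_S \sum_{q \text{ active on }S} d(q,o)$ telescopes: at each level, active points in all segments partition into disjoint sets, so the sum is bounded by $\sum_{p \in I} d(p,o)$. Summing over the $\lmax = O(\log n)$ levels yields $\mathbb{E}[\sum_{r \in R} d(r,o)] \le (\lmax \cdot \epsilon/\log n) \sum_{p \in I} d(p,o) = O(\epsilon)\sum_{p \in I} d(p,o)$, and the claimed bound follows.

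The main obstacle I anticipate is lining up the quantities in property~3 with what the type-assignment procedure actually does, i.e.\ verifying that $S_a$ (the active point count just before rounding at level $\ell$) matches the ``$t(1+\epsilon/\log n)$'' side of the inequality, so that the ratio $y/S_a$ is indeed bounded by $\epsilon/\log n$ rather than some looser expression. Once that correspondence is nailed down, the rest of the argument is just linearity of expectation combined with the uniform selection probability, the disjointness of active points across segments at a given level, and the $O(\log n)$ bound on the number of levels.
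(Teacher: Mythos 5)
Your proposal is correct and follows essentially the same route as the paper: decompose by level, bound the per-point drop probability by $O(\epsilon/\log n)$ (uniform selection probability within a segment times the property-3 rounding ratio), use disjointness of segments at a given level to compare to $\sum_{p\in I} d(p,o)$, and then sum over $\lmax = O(\log n)$ levels and invoke $\Rad(I) \le \opt$. The only difference is cosmetic: the paper packages the two ingredients of the per-point probability bound as standalone Properties~\ref{prop:probofred} and \ref{prop:epsilonnumber} (the latter derived from the DP's threshold-rounding rule rather than cited directly from Definition~\ref{def:typerespecting}), whereas you inline both, but the content is identical.
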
 
\begin{lemma}\label{lem:tsp} In expectation over the random dissection and type assignment $\TSP(R \cup \{o\})= O(\epsilon)\opt$ \end{lemma}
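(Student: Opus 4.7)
The plan is to exhibit an explicit tour on $R\cup\{o\}$ whose expected length (over both the randomized dissection and the randomized type assignment) is $O(\epsilon)\opt$; since $\TSP(R\cup\{o\})$ is bounded by the length of any Hamiltonian tour on this set, this suffices. My candidate tour is obtained by taking $\opt^{DP}$ and, on each tour $\pi_i^{DP}$, shortcutting every maximal arc that contains no red point by a straight-line chord. The resulting tour visits exactly $R\cup\{o\}$, and its length decomposes as (i) the sum of red sub-path lengths and (ii) the sum of shortcut-chord lengths.

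For (i), I would use the randomness of Algorithm~\ref{alg:typeassignment}. Fix a segment $s$ in a level-$\ell$ square $b$ with $S_a$ active points on it and $y$ to be dropped. Partition $s$ into $S_a$ consecutive arcs, each containing exactly one active point; Algorithm~\ref{alg:typeassignment} picks $y$ consecutive arcs starting at a uniformly random one, so by a standard uniform averaging the expected total length of the picked arcs equals $(y/S_a)\cdot\length(s)$. Definition~\ref{def:typerespecting}(3) gives $y/S_a\le \epsilon/\log n$. Summing over all segments at level $\ell$ and using that segments partition $\opt^{DP}$ at every level, the expected contribution at level $\ell$ is at most $(\epsilon/\log n)\,\length(\opt^{DP})$; summing over the $O(\log n)$ levels gives $O(\epsilon)\,\length(\opt^{DP})=O(\epsilon)\opt$ in expectation, using Theorem~\ref{thm:black}.

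For (ii), the argument would exploit the $\gamma$-multiple property of Definition~\ref{def:typerespecting}(2): a dissection square containing any red sub-path must contain at least $\gamma=\log^9 n/\epsilon^4$ rounded segments per threshold, each contributing at least two boundary crossings. Combined with $F(\opt^{DP})\le(1+O(\epsilon))\opt$ from Lemma~\ref{lem:foptdp}, the total $d_\ell$-weighted portal-crossing cost through ``rounded'' squares is forced to be at most a $1/\gamma$-fraction of the $F$-slack, i.e., at most $O(\epsilon^3/\log^7 n)\opt$. A multilevel charging of the shortcut chords against these crossings (routing each chord through the dissection squares whose boundaries it traverses and amortizing against the $\gamma$ rounded segments in each such square) is what I expect to yield expected total shortcut length $O(\epsilon)\opt$.

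The hard step is (ii): in the worst case a single shortcut chord on $\pi_i^{DP}$ could span a long black stretch, so the argument must leverage the specific calibration of $\gamma$ and of the extended objective $F$ to ensure that such chords are either geometrically short or collectively absorbed into the $F$-slack. Translating the global $F$-bound into a per-chord bound, and correctly tracking chords that traverse many levels of the dissection, is the technical crux.
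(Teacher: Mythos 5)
Your decomposition into (i) red sub-path lengths and (ii) connection cost is a natural one, and part (i) is essentially correct and mirrors the paper's Claim~\ref{claim:redlengths}: the uniform-start choice of Algorithm~\ref{alg:typeassignment} gives an expected $O(\epsilon/\log n)$ factor per segment via Property~\ref{prop:epsilonlength}, and summing over segments and levels yields $O(\epsilon)\opt$.

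Part (ii) is a genuine gap, and not a fixable one by the route you propose. The problem is that the total shortcut cost on $\opt^{DP}$ is intrinsically $\Theta(\opt)$ in the worst case, not $O(\epsilon)\opt$. Consider a tour $\pi_i$ that visits a single red point $a$ far from the depot: after shortcutting, $\pi_i$ collapses to the two-chord cycle $o\to a\to o$ of length $2d(o,a)$, which can be $\Theta(\length(\pi_i))$. If $\gamma$ or more tours each do this (say all their red points sit in one far-away level-$\ell$ square, so Definition~\ref{def:typerespecting}(2) is satisfied), the total chord cost is $\Theta(\opt^{DP})$. No charging against the $F$-slack can rescue this: the penalty term $(\epsilon/\log^2n)\sum c(\pi^{DP},\ell)d_\ell$ controls boundary-crossing \emph{counts} weighted by square side lengths, but says nothing about the positions of consecutive red points along a tour, and a single chord can be arbitrarily longer than $d_\ell$. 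Moreover, your ``route the chord through dissection squares and charge to the $\gamma$ rounded segments there'' idea fails because a straight-line chord may pass through many squares containing no rounded segments at all (the chord does not follow the tour), so the amortization has nothing to charge against.

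The paper avoids this by not building the red tour out of the $\opt^{DP}$ tours at all. Instead it works level by level: for each $\ell$ it connects the depot to one portal per affected square via a minimum spanning tree $MST(C\cup\{o\})$, then pays a local cost inside each square via the TSP bound $O(d_\ell\sqrt{\gamma})$ per group of $\gamma$ rounded segments (Theorem~\ref{thm:tsp}). The MST is cheap precisely because of the $\gamma$-multiple property combined with i-lightness: at least $\gamma/4r$ distinct tours of $\opt^{DP}$ cross any cut separating $o$ from a square in $B_\ell$, so the LP cut argument in Claim~\ref{claim:outsidecost} gives $MST(C\cup\{o\})\le (2r/\gamma)\opt^{DP}$. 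This is the key idea your proposal is missing: the connection cost has to be amortized \emph{across the many different tours entering the same square}, not charged per tour to the $F$-penalty.
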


\subsection{Properties of the randomized type assignment procedure}
We state some properties that will be useful in proving the Lemmas \ref{lem:rad},\ref{lem:tsp}. See appendix \ref{sec:randomtypeassignment} for the proofs.

Let $b$ be a level $\ell$ square containing points labelled type $\ell$ by Algorithm \ref{alg:ktour} and $S$ be a a rounded tour segment inside $b$. Let $R_s = r_1, r_2, \ldots r_d$ be the interval of points labelled type $\ell$ on $S$ ($|R_s|$ may be zero) and $S_a = s_1, s_2, \ldots s_x$ be the active points on $S$ prior to rounding segment $S$. We now list some Properties of the rounded segment $S$.  

\begin{property}\label{prop:epsilonnumber} $|R_s| \le |S_a|\cdot O(\epsilon/\log n)$ \end{property}

\begin{property}\label{prop:probofred}
A point $s\in S_a$ is in $R_s$ with probability $|R_s|/|S_a|$.
\end{property}

\begin{property}\label{prop:epsilonlength}$E[\length(R_s)] \le \length(S_a) \cdot O({\epsilon}/{\log n})$
\end{property}

\subsection{Proof of Lemma \ref{lem:rad}}
\begin{proof}Recall that,  $\Rad(R) = 2/k \sum_{x \in R} d(x,o)$, where $d(x,o)$ is the distance of point $x$ from the depot.  By Theorem \ref{thm:3approx} $\Rad(I) \le \opt$, so it is sufficient to show that $\Rad(R) \le O(\epsilon) \Rad(I)$.  Fix any level $\ell$ of the dissection and let $R_{\ell}$ be the set of points which were assigned type $\ell$. We show that in expectation $\Rad(R_{\ell}) \le O(\epsilon/\log n) \Rad(I)$. The claim follows by linearity of expectation (over all levels) since $\Rad(R) = \sum_{\textrm{ level } l} \Rad(R_{\ell})$.

\smallskip

Partition $R_{\ell}$ according to the tour segment it is from: $R_{\ell}^1 \subset S_1, R_{\ell}^2 \subset S_2, \ldots R_{\ell}^m \subset S_m$ where $R_{\ell}^j$ is the set of red points from tour segment $S_j$.  By definition we have that
\begin{equation}\label{radI}
\Rad(I) \ge \frac{2}{k} \sum_{j=1}^m \sum_{x\in S_j} d(o,x)
\end{equation} 
As $R_{\ell}$ is picked randomly, and by Properties \ref{prop:probofred} and \ref{prop:epsilonnumber}, $\Pr[x\in R_{\ell}^j] \le O(\epsilon/\log n)$, so we get
$$E[\Rad(R_{\ell})] = \frac{2}{k} \sum_{j=1}^m \sum_{x\in S_j} d(o,x) \Pr[x\in R_{\ell}^j] \le  \frac{2}{k}\sum_{j}^m \sum_{x\in S_j} d(o,x) \cdot O(\epsilon/\log n)$$ 
Combining this with equation \ref{radI} we get that $E[\Rad(R_{\ell})] \le  O(\epsilon/\log n)\Rad(I)$.
\end{proof}

\subsection{Proof of Lemma \ref{lem:tsp}}
\begin{proof}
Let $R_{\ell}$ be the points labeled type $\ell$ at level $\ell$. We show that $E[\TSP (R_{\ell} \cup \{ o\})] \le O(\epsilon/\log n) \opt$.
This implies Lemma \ref{lem:tsp} since the tours of $\{R_{\ell} \cup \{o\}\}$ from all levels can be pasted together at the depot to yield a tour of $(R\cup \{o\})$. 

\smallskip

Let $B_{\ell}$ be the squares at level $\ell$ containing points of $R_{\ell}$. We consider the cost of $\TSP(R_{\ell} \cup \{o\})$ in two parts: the outside and inside costs, where intuitively the outside cost will be the cost to get to the squares $B_{\ell}$ from the depot and the inside cost will be the cost of visiting the red points inside the square.  focusing on the outside cost, let $C$ be a set of points containing at least one portal from each square of $B_{\ell}$ such that $MST(C \cup \{o\})$ is minimized\footnote{ $C$ is used only for the analysis and does not need to be found explicitly}. The optimal tour of $R_{\ell} \cup \{O\}$ is at most $2MST(C \cup \{o\})$, thus

\begin{equation}\label{eq:tspRl} 
TSP(R_{\ell} \cup \{o\}) \le 2MST(C \cup \{o\}) + \sum_{b\in B_{\ell}} \textrm{ inside cost of } b 
\end{equation}

Claims \ref{claim:outsidecost} and \ref{claim:insidecost} prove that the quantities on the right hand side of equation \ref{eq:tspRl} are both $O(\epsilon/\log n)\opt$, proving that $TSP(R_{\ell} \cup \{o\}) \le O(\epsilon/\log n)$.
\end{proof}  

\begin{claim}\label{claim:outsidecost} In expectation over the random shifted dissection,
$E[2MST(C \cup \{o\})] \le O(\epsilon/\log n)\opt$.
\end{claim}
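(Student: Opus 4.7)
The plan is to bound $MST(C\cup\{o\})$ in two stages. First, I would control the number $|B_\ell|$ of level-$\ell$ squares participating in $C$, using the $\gamma$-multiplicity clause of the relaxed CVRP definition. Second, I would construct an explicit Steiner spanning tree along the dissection hierarchy, with cost controlled by $|B_\ell|$ and the side length $L$ of the bounding box.

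For the first stage, each $b\in B_\ell$ contains at least one tour segment of $\opt^{DP}$ that was rounded at level $\ell$; by clause~(2) of Definition~\ref{def:typerespecting} this forces at least $\gamma$ rounded segments of the relevant threshold to live in $b$, each contributing at least $2$ crossings of $b$'s boundary, so $c(\pi^{DP},\ell)\ge 2\gamma|B_\ell|$. Since $\opt^{DP}$ minimizes $F$ and $F(\opt^{DP})\le(1+O(\epsilon))\opt$ in expectation by Lemma~\ref{lem:foptdp}, the penalty term $(\epsilon/\log^2 n)\,d_\ell\,c(\pi^{DP},\ell)$ absorbed in $F$ is at most $O(\opt)$, which gives $E[|B_\ell|]\cdot d_\ell \le O(\opt\log^2 n/(\epsilon\gamma))$; with $\gamma=\log^9 n/\epsilon^4$ this is $O(\opt\,\epsilon^3/\log^7 n)$.

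For the second stage, I would pick $C$ to contain, for each $b\in B_\ell$, one portal of $b$ together with one portal of every ancestor square of $b$ at each level $\ell'<\ell$, and build the tree by connecting every selected portal to one in its parent square via an edge of length $O(d_{\ell'})$ staying inside the parent. Letting $B^{*}_{\ell'}$ denote the set of ancestor squares of $B_\ell$ at level $\ell'$, the resulting tree has cost $\sum_{\ell'\le\ell}|B^{*}_{\ell'}|\cdot O(d_{\ell'})$. Using $|B^{*}_{\ell'}|\le\min(|B_\ell|,4^{\ell'})$ and splitting the sum at $\ell^{*}=\log_4|B_\ell|$, a routine calculation gives $MST(C\cup\{o\})=O(\sqrt{|B_\ell|}\,L)$; Jensen's inequality then yields $E[MST]\le O(L)\sqrt{E[|B_\ell|]}$, which combined with the first-stage bound simplifies to $O(\epsilon/\log n)\opt$ using $L/d_\ell=2^\ell$.

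The main obstacle is closing the gap in the regime where $\opt$ is small compared to $L$ (e.g.\ customers clustered near the depot), because the hierarchical bound $O(\sqrt{|B_\ell|}\,L)$ is then too loose. To cover that regime I would complement it with the Haimovich--Rinnooy Kan lower bound $\opt\ge\Rad(I)\ge(2/k)\sum_p d(o,p)$: since each $b\in B_\ell$ holds at least $\gamma$ active customers, $\sum_{b\in B_\ell}d(o,b)\le k\opt/(2\gamma)$, so the star bound $MST\le\sum_b d(o,p_b)+O(d_\ell|B_\ell|)$ handles the clustered case. The delicate point is to verify that the minimum of these two bounds is always at most $O(\epsilon/\log n)\opt$, which is precisely the motivation for taking $\gamma$ as large as $\log^9 n/\epsilon^4$.
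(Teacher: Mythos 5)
Your approach is genuinely different from the paper's, and it has a real gap. You first bound $E[|B_\ell|]d_\ell$ via the crossing-penalty term in $F$ (which is fine and is essentially the estimate the paper uses for Claim~\ref{claim:connectioncost}), but in the second stage you throw away the structural information and bound $MST(C\cup\{o\})$ as if $B_\ell$ were an arbitrary collection of dissection cells. The hierarchical Steiner tree gives $O(\sqrt{|B_\ell|}\,L)$, and feeding in your stage-one bound yields $E[MST]=O\bigl(\sqrt{L\cdot 2^{\ell}\cdot\opt\cdot\epsilon^3/\log^7 n}\bigr)$; for this to be $O(\epsilon/\log n)\opt$ one needs $\opt=\Omega(L\,2^{\ell}\epsilon/\log^5 n)$, which at deep levels ($2^{\ell}=\Theta(L)=\Theta(n)$) requires $\opt=\Omega(n^2\epsilon/\log^5 n)$ -- far from always true. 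Your fallback star bound via $\Rad$ gives $MST\le \tfrac{k}{2\gamma}\opt+O(d_\ell|B_\ell|)$, which only helps when $k=O(\gamma\epsilon/\log n)=O(\log^8 n/\epsilon^3)$. Taking, say, $k=\Theta(n)$, $\opt=\Theta(n)$, and $\ell$ near $\ell_{\max}$ defeats both bounds simultaneously, so the ``delicate point'' you flag at the end is not a loose end but an open hole.

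The piece of structure you are not using is that the $\gamma$ rounded segments entering each $b\in B_\ell$ all belong to closed tours through the depot; that is much stronger than just ``$b$ is crossed $2\gamma$ times.'' The paper exploits exactly this with an LP scaling argument: every cut separating the depot $o$ from any nonempty subset of $C$ is crossed by at least $\gamma/4r$ tours of $\opt^{DP}$ (each i-light tour contributes at most $4r$ crossings per square, so $\gamma$ segments come from $\ge\gamma/4r$ distinct tours). Hence $\opt^{DP}$ is a feasible solution to the cut LP with right-hand side $\gamma/4r$, and scaling by $4r/\gamma$ gives a feasible fractional solution to the standard MST-cut relaxation on $C\cup\{o\}$, whose integrality gap is $2$. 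This immediately yields $MST(C\cup\{o\})=O(r/\gamma)\opt^{DP}=O(\epsilon/\log n)\opt^{DP}$ with no case split, no dependence on $k$, and no hierarchical tree. If you want to rescue a combinatorial version of your argument, the right move is to build the spanning tree for $C\cup\{o\}$ out of pieces of the $\opt^{DP}$ tours themselves (each already touches $o$), not out of a generic Steiner tree along the dissection; that is the combinatorial shadow of the LP argument.
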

\begin{proof}
We will show that, $MST(C \cup \{o\}) \le O(\epsilon/\log n)\opt^{DP}$. The claim follows, as $\opt^{DP} \le (1 + O(\epsilon)\opt$, by the structure Theorem \ref{thm:black}. 

\smallskip

Consider the fully connected graph $G$ with one vertex for each point in $C$ and one more for the depot. Define the weight of an edge of $G$ to be the distance between the two vertices connected by that edge. Consider the following linear program, $A$ with value $v$ on $G$. 
$$v=\min_x (w_e\cdot x_e) ~~\hbox{ s.t. }
\left\{ \begin{array}{lll}
\sum_{e\in \delta(S)} x_e & \geq \gamma/4r & \forall S \subset V\\
x_e& \geq 0 &
\end{array} \right.
$$
As each $b\in B_{\ell}$ contains some points labelled $\ell$ (i.e at least a group of rounded segments), $\opt^{DP}$ contains are least $\gamma$ tour segments crossing into $b$. Since each tour in $\opt^{DP}$ is i-light, there are at least $\gamma/4r$ tours entering $b$. Thus $\opt^{DP}$ has at least $\gamma/4r$ edges crossing any cut separating the depot from a point in $C$. As $v$ is the minimum cost way to have at least $\gamma/4r$ edges cross all such cuts, $\opt^{DP} \ge v$. 

\smallskip

Now consider the linear program $A'$ below. $A'$ is the relaxation of the IP for MST. Let $v'$ be the value of $A'$ on graph $G$.
$$v' = \min_x (w_e\cdot x_e) ~~\hbox{ s.t. }
\left\{ \begin{array}{lll}
\sum_{e\in \delta(S)} x_e & \geq 1 & \forall S \subset V\\
x_e& \geq 0 &
\end{array} \right.
$$

Observe that for any solution $x$ of $A$, $x'= x \cdot 4r/\gamma$ is a solution for $A'$. As $A$ and $A'$ have the same objective, $v \cdot 4r/\gamma = v'$. The MST relaxation $A'$ is known to have integrality gap at most $2$ \cite{vaz}, so that $v' \ge 2\textrm{ MST } (C\cup \{o\})$. Thus we have that
$$\opt^{DP} \ge v = v'\cdot (\gamma/4r) \ge  MST(C \cup \{o\})\cdot (\gamma/2r)$$ 
Thus $(2r/\gamma)\cdot\opt^{DP}\ge MST(C\cup \{o\})$ and as $2r/\gamma = o(\epsilon/\log n)$, the Claim is proved.
\end{proof}

\begin{claim}\label{claim:insidecost} In expectation over the random dissection and the random type assignment the total inside cost at level $\ell$ is at most $O(\epsilon/\log n)\opt$.
\end{claim}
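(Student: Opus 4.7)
The plan is to construct, for each $b \in B_\ell$, an explicit tour of the red points $R_b$ starting and ending at a designated portal $p$ of $b$, and to bound the total inside cost by summing over $b \in B_\ell$. Let $T_b$ denote the set of tour segments of $\opt^{DP}$ inside $b$ that contain points labelled type $\ell$; by Algorithm~\ref{alg:typeassignment}, for each $S \in T_b$ the red set $R_S$ is a contiguous block of active points along $S$. I decompose the inside tour into two pieces: (i) for each $S \in T_b$, walk through the ``mini-path'' $R_S$ in segment order at cost $\length(R_S)$, and (ii) connect the mini-paths and the portal $p$ via a Euclidean TSP through the $2|T_b|+1$ points (the $2|T_b|$ endpoints of the mini-paths together with $p$), all of which lie inside the $d_\ell\times d_\ell$ box $b$, contributing $O(d_\ell\sqrt{|T_b|})$ by the standard bound for TSP of $m$ points in a $d\times d$ box.

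Piece (i), summed over $b \in B_\ell$, is bounded by Property~\ref{prop:epsilonlength}: $E[\sum_{b\in B_\ell}\sum_{S\in T_b}\length(R_S)] \le O(\epsilon/\log n)\cdot E[\length(\opt^{DP})] \le O(\epsilon/\log n)\,\opt$ using Theorem~\ref{thm:black}. The key structural fact used for piece (ii) is that every $b \in B_\ell$ contains at least one segment rounded at level $\ell$, and by Definition~\ref{def:typerespecting} rounded segments in $b$ come in integer multiples of $\gamma$ per threshold, so $b$ contains at least $\gamma$ tour segments of $\opt^{DP}$. Counting entries across level-$\ell$ squares yields the deterministic bound $\gamma|B_\ell| \le \sum_b |T_b| \le c(\opt^{DP},\ell)$. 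From the extended-objective inequality $F(\opt^{DP}) \le (1+O(\epsilon))\opt$ in Lemma~\ref{lem:foptdp}, $E[c(\opt^{DP},\ell)\cdot d_\ell] \le O(\log^2 n/\epsilon)\,\opt$. Cauchy--Schwarz then gives $\sum_b\sqrt{|T_b|} \le \sqrt{|B_\ell|\cdot\sum_b|T_b|} \le c(\opt^{DP},\ell)/\sqrt{\gamma}$, so $E[\sum_b d_\ell\sqrt{|T_b|}] \le O(\log^2 n/(\epsilon\sqrt{\gamma}))\,\opt$. Since $\gamma = \log^9 n/\epsilon^4$, this is $O(\epsilon/\log^{5/2}n)\,\opt = o(\epsilon/\log n)\,\opt$, and the total inside cost is $O(\epsilon/\log n)\,\opt$ as claimed.

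The main obstacle is getting the right bound on the connection cost in piece (ii). A naive construction that makes a straight-line detour from the boundary of $b$ to each $R_S$ and back would charge $\Theta(d_\ell)$ per segment in $T_b$, summing to $\Theta(d_\ell\sum_b|T_b|) = \Theta(\opt)$, which is too large by polylog factors. The remedy is to bundle all detours inside $b$ into a single Euclidean TSP on $2|T_b|+1$ points, saving a factor of $\sqrt{|T_b|}$ geometrically, and to combine this saving with the $\gamma$-grouping of rounded segments from Definition~\ref{def:typerespecting} via Cauchy--Schwarz, extracting the polylogarithmic factor in $\gamma$ that absorbs the naive loss.
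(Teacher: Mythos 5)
Your proposal is correct and matches the paper's own argument: the same decomposition into interval lengths (bounded via Property~\ref{prop:epsilonlength}, cf.\ Claim~\ref{claim:redlengths}) and a connection cost (bounded via the $O(d_\ell\sqrt{m})$ box-TSP estimate of Theorem~\ref{thm:tsp}, the structural fact that every $b\in B_\ell$ contains at least $\gamma$ rounded segments, and the crossing penalty in $F$, cf.\ Claim~\ref{claim:connectioncost}). Your Cauchy--Schwarz step is a cosmetic variant of the paper's bundling of rounded segments into groups of size $\gamma$; in fact since $|T_b|\ge\gamma$ you can get $\sum_b\sqrt{|T_b|}\le\sum_b|T_b|/\sqrt{\gamma}$ directly without invoking Cauchy--Schwarz at all.
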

\begin{proof}
The inside cost at level $\ell$ is the sum of the inside costs of each square $b \in B_{\ell}$. The contribution of square $b \in B_{\ell}$ to the inside cost, is the sum the length of the intervals of type $\ell$ points inside $b$ plus the cost of connecting these intervals to the boundary of $b$. 
In Claim \ref{claim:redlengths} we show that, in expectation over the random type assignment, the sum over all squares in $B_{\ell}$ of the length of intervals of type $\ell$ points is $O(\epsilon/\log n)\opt^{DP}$.  The type $\ell$ intervals inside each $b \in B_{\ell}$ must be connected to each other and to the boundary of their square. We refer to $CC(\ell)$ as the total the connection cost at level $\ell$. $CC(\ell)$ is the sum of the length of the boundaries of each square $b\in B_{\ell}$ plus the cost of connecting the type $\ell$ intervals inside each $b\in B_{\ell}$ to the boundary of $b$.  Claim \ref{claim:connectioncost} shows that $CC(\ell) = O(\epsilon/\log n)F(\opt^{DP})$. By Lemma \ref{lem:foptdp} and Corollary \ref{cor:aroraExtention} $F(\opt^{DP}) \le (1 + O(\epsilon))\opt$ in expectation, proving this Claim.
\end{proof}

\begin{claim}\label{claim:redlengths} In expectation over the random type assignment the length of all intervals of type $\ell$  is  $O(\epsilon/\log n)\opt^{DP}$. 
\end{claim}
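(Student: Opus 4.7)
The plan is to bound the expected length of type-$\ell$ intervals one segment at a time via Property~\ref{prop:epsilonlength} and then sum over all rounded segments at level $\ell$. The first step is to identify exactly where type-$\ell$ labels are created: the randomized type assignment procedure (Algorithm~\ref{alg:typeassignment}) places type-$\ell$ labels only when rounding a segment $S$ that sits inside some level-$\ell$ dissection square. Hence the intervals of type-$\ell$ points are precisely $\{R_s : S \in \mathcal{S}_\ell\}$, where $\mathcal{S}_\ell$ denotes the collection of rounded tour segments of $\opt^{DP}$ inside level-$\ell$ squares.

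The second step is to invoke Property~\ref{prop:epsilonlength} on each $S \in \mathcal{S}_\ell$: writing $S_a$ for its set of active points before rounding, the property gives $E[\length(R_s)] \le O(\epsilon/\log n)\cdot \length(S_a)$, and clearly $\length(S_a) \le \length(S)$ since $S_a$ lies on $S$. Summing over $\mathcal{S}_\ell$ and using linearity of expectation yields
\[
E\!\left[\sum_{S \in \mathcal{S}_\ell} \length(R_s)\right] \;\le\; \frac{O(\epsilon)}{\log n}\sum_{S \in \mathcal{S}_\ell}\length(S).
\]

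The third step is to observe that the segments in $\mathcal{S}_\ell$ are pairwise-disjoint subpaths of the tours of $\opt^{DP}$: by definition, each segment is a connected component of $\pi_i \cap b$ for some tour $\pi_i$ and some level-$\ell$ dissection square $b$, and distinct level-$\ell$ squares have disjoint interiors. Consequently $\sum_{S \in \mathcal{S}_\ell}\length(S) \le \opt^{DP}$, which combined with the preceding bound gives $E\!\left[\sum_{S \in \mathcal{S}_\ell}\length(R_s)\right] \le O(\epsilon/\log n)\opt^{DP}$, as claimed.

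There is no real obstacle here, since the analytic work has already been absorbed into Property~\ref{prop:epsilonlength}; the only thing to be careful about is matching definitions. In particular, one must check that the ``interval of type-$\ell$ points'' produced by Algorithm~\ref{alg:typeassignment} is precisely the set $R_s$ to which Property~\ref{prop:epsilonlength} applies (it is, because the procedure selects $y$ consecutive active points on $S$, so these form one contiguous interval along $S$, possibly wrapping around the segment's endpoints inside $b$).
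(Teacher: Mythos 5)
Your proof is correct and takes essentially the same approach as the paper's: apply Property~\ref{prop:epsilonlength} to each rounded segment, use linearity, and conclude by noting that the level-$\ell$ segments are disjoint subpaths of $\opt^{DP}$ so their lengths sum to at most $\opt^{DP}$. The paper organizes the sum square-by-square before summing over segments, but this is a cosmetic difference, not a substantive one.
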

\begin{proof}
Sum the lengths of intervals of type $\ell$ over all squares in  $B_{\ell}$ squares, use Property \ref{prop:epsilonlength} and linearity.
See appendix \ref{sec:randomtypeassignment} for a detailed proof.
\end{proof}

\begin{claim}\label{claim:connectioncost}
The total connection cost for level $\ell$ is $O(\epsilon/\log n)F(\opt^{DP})$. 
\end{claim}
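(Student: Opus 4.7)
The plan is to decompose $CC(\ell)$ into a perimeter part and an interior part, bound each using the structural fact that every $b\in B_\ell$ must contain at least $\gamma$ rounded tour segments (part 2 of Definition~\ref{def:typerespecting}), and then translate the bound into one about $F(\opt^{DP})$ via the crossings term of the extended objective.

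Set $c_\ell := c(\opt^{DP},\ell)$. For each level-$\ell$ square $b$, let $c_b$ denote the number of times $\opt^{DP}$ crosses $\partial b$ and let $|I_b|$ denote the number of type-$\ell$ intervals inside $b$. Since every crossing of a level-$\ell$ dissection line is shared by exactly two level-$\ell$ squares, $\sum_b c_b = 2c_\ell$. If $b\in B_\ell$ then $b$ contains at least one rounded segment with type-$\ell$ points; by part~2 of Definition~\ref{def:typerespecting} the number of rounded segments in $b$ with that threshold is a positive multiple of $\gamma$, so $b$ contains at least $\gamma$ rounded segments and hence $c_b \ge 2\gamma$. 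Summing gives $|B_\ell|\le c_\ell/\gamma$. Moreover each interval lies on a distinct rounded segment and $b$ contains at most $c_b/2$ segments altogether, so $|I_\ell| := \sum_{b\in B_\ell}|I_b|\le c_\ell$.

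I would next bound the connection cost inside a single $b\in B_\ell$ by treating each interval as a single point and visiting all such points together with one fixed point of $\partial b$ by a Euclidean TSP tour inside $b$. By the classical bound on Euclidean TSP for points in a bounded square, this tour has length $O(\sqrt{|I_b|+1}\,d_\ell)$; the length walked \emph{inside} each interval is not part of the connection cost, since it is charged separately by Claim~\ref{claim:redlengths}. Adding the perimeter contribution $4 d_\ell$ and summing over $B_\ell$ gives
\begin{equation*}
CC(\ell) \;\le\; O(d_\ell)\Bigl(|B_\ell| \;+\; \sum_{b\in B_\ell}\sqrt{|I_b|+1}\Bigr).
\end{equation*}
Cauchy--Schwarz yields $\sum_{b\in B_\ell}\sqrt{|I_b|}\le \sqrt{|B_\ell|\cdot |I_\ell|}\le c_\ell/\sqrt{\gamma}$, so $CC(\ell)\le O(d_\ell\, c_\ell/\sqrt{\gamma})$.

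Finally, the definition of $F$ in Equation~(\ref{eq:F}) gives $F(\opt^{DP})\ge (\epsilon/\log^2 n)\,c_\ell\, d_\ell$, and therefore $CC(\ell)\le O\!\bigl(\log^2 n/(\epsilon\sqrt{\gamma})\bigr)\,F(\opt^{DP})$. Plugging in $\gamma = \log^9 n/\epsilon^4$ (so $\sqrt{\gamma}=\log^{4.5}n/\epsilon^2$) yields $CC(\ell)\le O(\epsilon/\log^{2.5}n)\,F(\opt^{DP})\le O(\epsilon/\log n)\,F(\opt^{DP})$, as required. The main obstacle I expect is the Euclidean TSP step: carefully arguing that the intervals in $b$ can actually be connected to $\partial b$ at cost $O(\sqrt{|I_b|+1}\,d_\ell)$. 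This rests on the standard bound for the Euclidean TSP of $k$ points in a $d_\ell\times d_\ell$ square together with the observation that walking along the intervals themselves contributes only to Claim~\ref{claim:redlengths} and not to $CC(\ell)$; the rest of the argument is bookkeeping on the structural parameters $|B_\ell|$ and $|I_\ell|$.
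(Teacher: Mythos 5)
Your proof is correct and reaches the same final inequality $CC(\ell)\le O(d_\ell\,c_\ell/\sqrt{\gamma})$ as the paper, by essentially the same route: use the $O(\sqrt{\,\cdot\,}\,d_\ell)$ Euclidean-TSP bound (Theorem~\ref{thm:tsp}) for the in-square connection cost, then convert crossings into $F$ via the penalty term of Equation~(\ref{eq:F}). The only real difference is bookkeeping. The paper exploits the group structure of Definition~\ref{def:typerespecting} directly: each square $b\in B_\ell$ has $g_b\gamma$ rounded segments; each block of $\gamma$ contributes $O(d_\ell\sqrt{\gamma})$; and $c_\ell\ge 2g_\ell\gamma$ closes the accounting without any averaging argument. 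You instead bound $|B_\ell|\le c_\ell/\gamma$ and $|I_\ell|\le c_\ell$, apply the TSP bound once per square with parameter $|I_b|$, and then aggregate with Cauchy--Schwarz, $\sum_b\sqrt{|I_b|}\le\sqrt{|B_\ell|\,|I_\ell|}\le c_\ell/\sqrt\gamma$. Your version avoids talking about groups of $\gamma$ explicitly (and in fact gives a slightly sharper per-square bound when $g_b>1$), at the cost of introducing the Cauchy--Schwarz step; the paper's version is more tightly coupled to the way Definition~\ref{def:typerespecting} is stated. Either way the parameter trade-off comes out identically: $O\bigl(\log^2 n/(\epsilon\sqrt\gamma)\bigr)=O(\epsilon/\log^{2.5}n)$, comfortably inside $O(\epsilon/\log n)$. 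Two small points worth polishing in a write-up: be explicit that each rounded segment is a connected component of $\pi_i\cap b$ and so contributes exactly two crossings to $\partial b$ (this is what drives both $c_b\ge 2\gamma$ and the bound \mbox{$\#$segments} $\le c_b/2$), and note that the lower bound $c_b\ge 2\gamma$ already uses that at least one threshold class in $b$ is a \emph{positive} multiple of $\gamma$ rounded segments, which holds because $b\in B_\ell$ means some segment in $b$ was actually rounded at level $\ell$.
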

\begin{proof}
Let $CC(\ell)$ denote the total connection cost which is the cost of  connecting the type $\ell$ intervals inside each  $b \in B_{\ell}$ to the boundary of the square. Focus on one square $b \in B_{\ell}$. As $\opt^{DP}$ is relaxed, $b$ has $g_b\cdot \gamma$ rounded segments for some integer $g_b > 0$. Consider a group of $\gamma$ rounded segments. Let $R'$ be a set containing one type $\ell$ point from each of the $\gamma$ segments in the group. The cost to connect all red intervals in the group with the boundary of the square is at most $MST(R') + 5\cdot d_{\ell}$, where $d_{\ell}$ is the side length of square $b$. We bound this using the following bound for TSP~\cite{hr85}\cite{aktt}. (See \cite{hr85} for a proof).

\begin{theorem}\label{thm:tsp}
Let $U$ be a set of points in $2d$-Euclidean space. Let $d_{max}$ be the max distance between any two points of $U$. Then $TSP(U) = O(d_{max} \sqrt{|U|})$
\end{theorem}

In our context, $d_{max} = d_{\ell}$, and $U = R'$. Since $|R'| = \gamma$, $|U| = \gamma$. By Theorem \ref{thm:tsp} we have that $MST(R') + 5\cdot d_{\ell} = O(d_{\ell} \cdot \sqrt{\gamma})$. This holds for each of the $g_b$ groups of rounded segments inside $b$ thus we have that the total connection cost for $b$ is $O(g_b\cdot d_{\ell} \cdot \sqrt{\gamma})$.

\smallskip

Let $g_{\ell}= \sum_{b\in B_{\ell}} g_b$, be the total number of groups of rounded segments at level $\ell$. The total connection cost for level $\ell$ is the sum of the connection costs over all squares $b\in B_{\ell}$,
\begin{equation}\label{eq:totalconnection} CC(\ell) =   O( g_{\ell} \cdot d_{\ell} \cdot \sqrt{\gamma}) 
\end{equation}

Each of the  $g_{\ell} \cdot \gamma$ rounded tour segments intersects twice with the boundary of a level $\ell$ dissection square, thus: $c(\pi^{DP}, \ell) \ge 2g_{\ell}\gamma$, with $\pi^{DP}$ being the tours of $\opt^{DP}$. Using equation \ref{eq:totalconnection}, 
\begin{equation}\label{eq:ccl}
O\left({d_{\ell}}/{\sqrt{\gamma}}\right) \cdot c(\pi^{DP}, \ell)\ge CC(\ell)
\end{equation}
For objective function $F$, (defined in \ref{eq:F}), we have $(\log^2 n/\epsilon) \cdot F(\opt^{DP}) \ge c(\pi^{DP}, \ell)d_{\ell}$.  Substituting it for $c(\pi^{DP}, \ell)d_{\ell}$ in equation \ref{eq:ccl} we get, $O(1/\sqrt{\gamma})(\log^2 n/\epsilon) \cdot  F(\opt^{DP}) \ge CC(\ell) $. As $\gamma = \log^9n/\epsilon^4$ this reduces to $O({\epsilon}/{\log n}) \cdot F(\opt^{DP}) \ge CC(\ell)$, which proves this Claim.
\end{proof}
\section{Derandomization}\label{sec:derandomize}
Arora's dissection can be derandomized by trying all choices for the shifts $a$ and $b$. More efficient derandomizations are given in Czumaj and Lingas and in Rao and Smith \cite{cz98, rs98}.  As for the randomized type assignment Algorithm \ref{alg:typeassignment}, to guarantee that the cost of the dropped points is small, when selecting an interval $Y$ to drop from a segment $S$,  we only need to pick $Y$ such that (1) $\Rad(Y) \le O(\epsilon/\log n) \Rad(S)$  and  (2)$\length(Y) \le O(\epsilon/\log n) \length (S)$. In Lemma \ref{lem:rad} and Property \ref{prop:epsilonlength} we prove that these two conditions hold at the same time, in expectation when $Y$ is chosen by first selecting a point uniformly from $S$ and then selecting the next $|Y|-1$ consecutive points. To derandomize we can test the at most $|S|$ intervals of length $|Y|$ in $S$, (each starting from a different point in $S$), and select any interval that satisfies these two conditions.
%%%%%%%%%%%%%%%%%%%%%%%%%%%%%%%%%%%%%%%%%%%%%%%%%%%%%%%%%%%%%%%%%%
%% BIBLIOGRAPHY 
%%%%%%%%%%%%%%%%%%%%%%%%%%%%%%%%%%%%%%%%%%%%%%%%%%%%%%%%%%%%%%%%%%

\newpage

\appendix

\section{Technical Tools}\label{sec:tools}

\paragraph{The Perturbation} Define a bounding square as the smallest square whose side length $L$ is a power of $2$ that contains all input points and the depot. Let $d$ denote the maximum distance between any two input points.  Place a grid of granularity $d \epsilon/n$ inside the bounding square. Move every input point to the center of the grid square it lies in. Several points may map to the same grid square center and we will treat these as multiple points which are located at the same location. Finally scale distances by $4n/(\epsilon d)$ so that all coordinates become integral and the minimum non-zero distance is least $4$.   

A solution for the perturbed instance can be extended to a solution for the original instance by taking detours from the grid centers to the locations of the points. The cost of such detours will be at most $n \cdot \sqrt{2}d\epsilon/n$. As the two farthest points must be visited from the depot we have that $2d \le \opt$. Thus the total cost of the detours is $\le \epsilon\opt$ and is negligible compared to $\opt$. Note also that scaling does not change the structure of the optimal solution. After scaling the maximum distance between points is $4n/\epsilon$ which is $O(n)$ for constant $\epsilon$.

\paragraph{Randomized Dissection} A dissection of the bounding square is obtained by recursively partitioning a square into 4 smaller squares of equal size using one horizontal and one vertical dissection line. The recursion stops when the smallest squares have size $1 \times 1$. The bounding square has level $0$, the 4 squares created by the first dissection have level $1$, and since $L =O(n)$ the level of the $1\times 1$ squares will be $\ell_{\max} = O(\log n)$.  The horizontal and vertical dissection lines are also assigned levels. The boundary of the bounding square has level $0$, the $2^{i-1}$ horizontal and $2^{i-1}$ vertical lines that form level $i$ squares by partitioning the level $i-1$ squares  are each assigned level $i$. 
A randomized dissection of the bounding square is obtained by randomly choosing integers $a,b \in [0, L)$, and shifting the $x$ coordinates of all horizontal dissection lines by $a$ and all vertical dissection lines by $b$ and reducing modulo $L$. For example the level $0$ horizontal line is moved from $L/2$ to $a + L/2 \mod L$ and the level $0$ vertical line is moved to $b+ L/2 \mod L$. The dissection is "wrapped around" and wrapped around squares are treated as one region. The probability that a line $l$ becomes a level $\ell$ dissection line in the randomized dissection is 
\begin{equation}
Pr(\level(l) = \ell) = 2^{\ell}/L
\end{equation}

\section{Relaxed CVRP} \label{sec:relaxed}

\begin{proof}(Proof of Lemma \ref{lem:assigntypes}) We give an type assignment procedure which initially assigns all points to type $-1$ and never sets any point to a type below $-1$. As $S$ is a valid CVRP solution all tours in $S$ contain at most $k$ points of type $-1$ satisfying the first condition of definition \ref{def:typerespecting}.   The procedure works in a bottom up fashion in the dissection tree from level $\ell_{\max}$ to level $0$.  At the current level $\ell$ consider each dissection square one at a time. For any threshold value $t_i$ for $i \le \tau$ while square $b$ has at least $\gamma$ unrounded tour segments, select exactly $\gamma$ such segments and perform a \emph{group-rounding} as follows: Examine each of the $\gamma$ segments one at a time. If the unrounded tour segment has $x$ active points with $t_i < x < t_{i+1}$, pick any $x-t_i$ of these active points and label them as type $\ell$.
Perform as many group-rounding steps as necessary until square $b$ has at most $\gamma*\tau$ unrounded tours. Proceed similarly to the other squares at level $\ell$.

\smallskip 
The type assignment procedure does not change the construction of any of the tour in $S$ thus the cost of $S$ is unchanged. Now we show that definition \ref{def:typerespecting} is satisfied.  While working at a level $\ell$, the procedure performs group-rounding on each square at level $\ell$ until the square has at most $\gamma*\tau$ unrounded segments. As the group-rounding rounds exactly $\gamma$ segments together there will always be an integer multiple of $\gamma$ rounded segments in each square. Condition (ii) continues to hold while the procedure works on levels $j< \ell$ as in those levels the procedure only labels points with type $j<\ell$ so the number of active points at level $\ell$ remains the same. As for the third condition of definition \ref{def:typerespecting}, note that before rounding a segment at level $\ell$, all points on the segment have type $-1$ or a type greater than $\ell$. Thus prior to rounding the segment will have $x$ active points at level $\ell$ and at level $\ell +1$. Let $t_i \le x \le t_{i+1}$. To round the segment at $\ell$, label $x-t_i$ points with type $\ell$. This still leaves $t_i$ active points at level $\ell$ and $x$ active points at level $l+1$. As $t_i(1 + \epsilon/\log n) = t_{i+1} > x$, the third condition of definition \ref{def:typerespecting} is satisfied. 
\end{proof}

\section{Extensions of Arora's Results to CVRP}\label{sec:extentionsArora}
Let  $t(\pi_j, l)$ denote the number of times a tour $\pi_j$ crosses dissection line $l$. Arora proved that the $\length(\pi_j) \ge \frac{1}{2}\sum_{\textrm {line l}} t(\pi_j, l)$ \cite{arora}. For CVRP as $\sum_{j} \pi_j = \opt$, this implies
\begin{equation}\label{eq:aroraintersection}
\opt \ge 1/2 \sum_{ \textrm{ line } l} t(\pi, l)
\end{equation}

With probability $2^{\ell+1}/L$ a dissection line $l$ forms the boundary for some level $\ell$ square \footnote{The boundaries of level $\ell$ squares are formed by lines at levels $\le \ell$}.  For any level $\ell$ as $E(c(\pi, \ell)) = \sum_{\textrm{ line } l} t(\pi, l) \cdot \Pr[l \textrm{ forms a boundary a level } \ell \textrm{ square }]$, we have
\begin{equation}\label{eq:squarecrossings} 
E(c(\pi, \ell)) = \frac{2^{\ell+2}}{L}  \sum_{\textrm{ line } l} t(\pi, l)
\end{equation}
\begin{proof}(Proof of Lemma \ref{lem:lbopt}) Combine equations \ref{eq:aroraintersection} and \ref{eq:squarecrossings} to get $E(c(\pi, \ell)) \le \frac{2^{\ell+3}}{L}\opt$. The Lemma follows from the fact that a level $\ell$ square has side length $d_{\ell} = L/2^{\ell}$. 
\end{proof}

\smallskip
Let $\pi$ denote the tours of $\opt$ and $\pi^{L}$ the tours of $\opt^L$. Arora gives a procedure to modify $\pi$ into the light tours $\pi^L$ \cite{arora}. But the procedure may create new crossings in $\pi^L$ with dissection squares not present in $\pi$. The next claim bounds the number of these new crossings.
\begin{proof}(Proof of Lemma \ref{lem:crossings}) To modify $\pi$ into $\pi^{L}$ Arora's procedure does bottom up patching to ensure that each tour crosses the edges of dissection squares at most $O(r)$ times.  The second step is to take detours (along the sides of squares) to make the tours portal respecting. Both steps, patching and detouring, can add new crossings to $\pi^{L}$ which are not present in $\pi$.  A patching on edge $e$ of square $b$ adds at most $6$ new crossings to each child square inside $b$ with edge $e$ as a boundary and each detour adds at at most $2$ new crossings to each such child square.  

As Arora's procedure works bottom up the patching and detours taken at all levels $j>\ell$ can add crossings in $\pi^{L}$ with level $\ell$ squares. As there are $2^{\ell-j}$ level $\ell$ squares along the edge of a level $j$ square,  a patching step on the edge of a level $j$ square adds at most $6 \cdot 2^{\ell-j}$ new crossings at level $\ell$ and each detour along these edges add at most $2\cdot 2^{\ell-j-1}$ new crossings at level $\ell$. Thus we have that, 
$$E[c(\pi^{L}, \ell)] \le 6\sum_{j \le \ell} 2^{\ell-j} ( \# \textrm{ patching at level } j) + 2\sum_{j\le \ell} 2^{\ell-j} (\#\textrm{ detours at level } j)$$
A patching step is performed only when a group of at least $O(r)$ crossings are identified in $\pi$, thus an upper bound on the number of expected patching at level $j$ is $E[c(\pi,j)]/O(r)$. Since each crossing at level $j$ can require a detour, we get,
$$E[c(\pi^{L}, \ell)] \le 6\sum_{j \le \ell} 2^{\ell-j} \frac{E[c(\pi, j)]}{O(r)} + 2\sum_{j\le \ell} 2^{\ell-j} E[c(\pi, j)]$$
 
Writing out the definitions of $E[c(\pi, j)]$ using equation \ref{eq:squarecrossings} we see that for all $j\le \ell$, $ E[c(\pi, j)] = E[c(\pi, \ell)] / 2^{\ell-j}$. Substituting this into the above equation simplifies it to, 
$$E[c(\pi^{L}, \ell)] \le 6\sum_{j \le \ell} \frac{E[c(\pi, \ell)]}{O(r)} + \sum_{j\le \ell} E[c(\pi, \ell)]$$

As $\ell \le O(\log n)$, we have that $E[c(\pi^{L}, \ell)] \le O(\log n)E[c(\pi, \ell)](1 + O(1/r))$. This proves the Lemma as $O(1/r) = O(\epsilon)$ is a constant.
\end{proof}
   
\section{Properties of the randomized type assignment procedure}\label{sec:randomtypeassignment}

Let $b$ be a level $\ell$ square containing points labelled type $\ell$ by Algorithm \ref{alg:ktour} and $S$ be a a rounded tour segment inside $b$. Let $R_s = r_1, r_2, \ldots r_d$ be the interval of points labelled type $\ell$ on $S$ ($|R_s|$ may be zero) and $S_a = s_1, s_2, \ldots s_x$ be the active points on $S$ prior to rounding segment $S$. We now list some Properties of the rounded segment $S$.  

\begin{proof}(Proof of Property \ref{prop:epsilonnumber})  In the DP's history, $S$ has a profile $\Phi$ with a flag set to true as $S$ is a rounded segment. Suppose that after $S$ is concatenated according to $\Phi$ it has $x$ active points. Then the DP counts $S$ as a rounded segment having exactly $t_i$ active points for the unique threshold value $t_i$ lying in the interval $[x/(1+\epsilon/\log n), x]$. To get exactly $t_i$ active points on $S$ we would need to set at most $x - t_i \le x(\epsilon/\log n)$ active points to $\ell$. Thus $|R_s|\le x\epsilon/\log n$ while $|S_a| = x$.
\end{proof}

\begin{proof}(Proof of Property \ref{prop:probofred})
Each point  $s\in S_a$  belongs to $|R_s|$ intervals as each interval consists of $|R_s|$ consecutive points. There are a total of $|S_a|$ different intervals, each starting at a different point in $S_a$ and Algorithm \ref{alg:typeassignment} picks uniformly among them.
\end{proof}

\begin{definition} (Length of interval) Let $b_1, b_2$ be the points on the boundary of $b$ where $S$ enters and exits $b$. Then $S$ visits $s_1$ after entering at $b_1$ and it visits $s_x$ before exiting from $b_2$  Let $d(u,v)$ denote the distance between points $u$ and $v$. If $R_s$ does not contain both $s_1$ and $s_x$ then $\length(R_s) = \sum_{i}^d d(s_i, s_{i+1})$.  Otherwise let $r_e=s_x$, then $r_{e+1}=s_1$ as Algorithm \ref{alg:typeassignment} wraps around and $\length(R_s)=\sum_{i= 1}^{e-1} d(s_i, s_{i+1}) + d(s_x, b_2) + d(b_1, s_1) + \sum_{i=e+1}^{d} d(r_{i}, r_{i+1})$.
\end{definition}

\begin{proof}(Proof of Property~\ref{prop:epsilonlength})
Let $b_1$ and $b_2$ be the points where $S$ enters and exits square $b$. Define $z_x = d(b_1, s_1) + d(s_x, b_2)$ and $z_i = d(s_i, s_{i+1})$, for $ 1 \le i < x$. Then the length of $S_a$ is $\sum_{i=1}^{x} z_i$ and 
$$E[\length(R_a)]= \sum_{i=1}^{x} z_i \Pr[z_i \textrm{ is counted inside } R_s]$$
For all $i$ the probability that $z_i$ is counted is the probability that $s_i$ and its consecutive point, $s_{i+1}$,  are both included in $R_s$. (The consecutive point of $s_x$ is $s_1$). A point $s \in S_a$ belongs to exactly $|R_s|$ intervals and the consecutive point of $s$ appears in $|R_s|-1$ of these intervals. Thus the $\Pr[z_i \textrm{ is counted }] = (|R_s|-1)/(|S_a|)$. Applying Property \ref{prop:epsilonnumber},
$$E[\length(R_s)]  =  \sum_{i=1}^{x} z_i \frac{|R_s|-1}{|S_a|} = O\left(\frac{\epsilon}{\log n}\right)\sum_{i=1}^x z_i = O\left(\frac{\epsilon}{\log n}\right)\length(S_a)$$
\end{proof}

\begin{proof}(Proof of Claim~\ref{claim:redlengths})
Sum the lengths of intervals of type $\ell$ over all squares in  $B_{\ell}$ squares, use Property \ref{prop:epsilonlength} and linearity.
Consider a square $b \in B_{\ell}$ and the type $\ell$ points inside $b$, $R_b$ . Partition the points in $R_b$ according to the segments of $b$ they come from: $r_1 \subset s_1, r_2 \subset s_2, \ldots r_m \subset s_m$ such that $r_j$ is the set of points labelled type $\ell$ from tour segment $s_j$. By Property \ref{prop:epsilonlength}, in expectation over the random type-assignment the length of $r_j$ is at most $O(\epsilon/\log n)$ times the length of $s_j$. By linearity, $\sum_{j}^m E[\length(r_j)] \le O(\epsilon/\log n)\sum_{j=1}^m \length(s_j)$.
Let $\opt_b^{DP}$ denote the projection of $\opt^{DP}$ inside square $b$. $\opt^{DP}_b \ge \sum_{j=1}^m {\length(s_j)}$, and  $\opt^{DP}$ is at least the sum of  $\opt_b^{DP}$ over all squares $b \in B_{\ell}$. This implies that the total length of red intervals at level $\ell$ is at most $O(\epsilon/\log n)\opt^{DP}$.  
\end{proof}

\end{document}